\documentclass[12pt]{amsart}

\usepackage[numbers,square,comma]{natbib}
\usepackage{float}
\usepackage[pdfpagelabels]{hyperref}
\usepackage[nointegrals]{wasysym}
\usepackage{amsmath}
\usepackage{amssymb}
\usepackage{amsthm}
\usepackage{amsaddr}
\usepackage{graphicx}
\usepackage{geometry}
\usepackage{tikz}
\usepackage{tkz-graph}
\usepackage{chngcntr}
\usepackage{thmtools}
\usepackage[linesnumbered,lined,boxed,commentsnumbered]{algorithm2e}
\usepackage{thm-restate}
\usepackage{tabularx}
\usepackage{environ}
\usepackage{algorithm2e}
\usetikzlibrary{shapes}

\counterwithin{figure}{section}

\makeatletter
\newcommand{\problemtitle}[1]{\gdef\@problemtitle{#1}}% Store problem title
\newcommand{\probleminput}[1]{\gdef\@probleminput{#1}}% Store problem input
\newcommand{\problemquestion}[1]{\gdef\@problemquestion{#1}}% Store problem question
\NewEnviron{problem}{
  \problemtitle{}\probleminput{}\problemquestion{}% Default input is empty
  \BODY% Parse input
  \par\addvspace{.5\baselineskip}
  \noindent
  \begin{tabularx}{\textwidth}{@{\hspace{\parindent}} l X c}
    \multicolumn{2}{@{\hspace{\parindent}}l}{\@problemtitle} \\% Title
    \textbf{Input:} & \@probleminput \\% Input
    \textbf{Question:} & \@problemquestion% Question
  \end{tabularx}
  \par\addvspace{.5\baselineskip}
}
\makeatother

\theoremstyle{plain}
\newtheorem{theorem}{Theorem}[section]
\newtheorem{definition}[theorem]{Definition}

\newtheorem{claim}[theorem]{Claim}
\newtheorem{example}[theorem]{Example}

\newtheorem{proposition}[theorem]{Proposition}

\newtheorem*{openproblem}{Problem}

\newcommand{\CSP}{\operatorname{CSP}}

\title[Complexity of Maltsev CSPs]{Maltsev Constraint  Satisfaction Problems and Deterministic Logspace With Counting}

\author[D.Delic]{Dejan Delic}
\address{Department of Mathematics, Toronto Metroplitan University}
\email{ddelic@torontomu.ca}

\author[A.Syed]{Ali Syed}
\address{Department of Mathematics, Toronto Metroplitan University}
\email{ali.mortoza.syed@torontomu.ca}

\begin{document}

\begin{abstract}

In this article, we prove that the problem of solving $\operatorname{CSP}(\mathbf{A})$, where $\mathbf{A}$ is a finite relational template which admits a Maltsev polymorphism is in a specific complexity class \textsc{DET}, which is related to the complexity of computing the determinant of a matrix with integer entries. Such a class is intimately related to well-studied \textsc{MOD}-logspace classes in the theory of computational complexity. To prove this fact, we develop a new algorithm for solving syntactically simple binary instances of Maltsev constraint satisfaction problems, rather different from the well-known Bulatov-Dalmau algorithm, which does not require the explicit use or knowledge of a Maltsev polymorphism of the template but, rather, utilizes a graph whose vertices are  2-generated subuniverses of $\mathbb{A}$, where $\mathbb{A}$ is the Maltsev algebra parametrizing $\operatorname{CSP}(\mathbf{A})$. The theoretical importance of this algorithm is reflected in two facts: (1) it places the problem $ \operatorname{CSP}(\mathbf{A})$ in a complexity class related to the deterministic logspace with counting, which, in itself,  has a strong connection to a variety of standard algorithmic problems in linear algebra, and (2) it only makes use of the relational structure of the template without the need for the explicit use of a compatible Maltsev polymorphism, depending entirely on the strong ``symmetry" of constraints compatible with such polymorphisms and the knowledge of 2-generated subuniverses of $\mathbb{A}$.
\end{abstract}

\keywords{Constraint satisfaction problem, Maltsev polymorphisms, logspace with counting complexity classes}
\subjclass[2020]{03D15, 08A70, 08B05, 68Q25}

\maketitle

\section{Introduction}
One of the fundamental problems in constraint programming and, more widely, in the field of artificial intelligence, is the problem of determining the computational complexity of constraint satisfaction problems (CSPs, for short). The problem, in its full generality, is \textsc{NP}-complete but many of its subclasses are tractable with algorithms which are well understood. In this paper, we adopt the following convention to studying the complexity of CSPs: we study the restrictions of  the instances by allowing a fixed set of constraint relations. This approach is generally referred to in the literature as a \emph{constraint language} or, a \emph{fixed template} \cite{b-j-k}. This point of view has lead to significant progress in the study of the complexity of constraint satisfaction in the past 30 years or so. The culmination of this approach was realized in recent proofs of the Dichotomy Conjecture in \cite{bulatov2017} and \cite{zhuk2017}.

One of the early landmark results in this direction were the proof of A. Bulatov   and a substantially simplified version of the same result by  A. Bulatov and V. Dalmau (\cite{bulatov2006simple}) concerning the existence of such an algorithm for a fairly general, yet rather natural class of parametrized CSPs, those whose parametrizing algebra is \emph{Maltsev}, i.e. in which the constraint language is invariant under an algebraic operation satisfying the condition $m(x,x,y)\approx m(y,x,x)\approx y$, for all elements $x$ and $y$ of the algebra. Their algorithm, also known as the Generalized Gaussian Elimination, provided a common generalization for already known algorithms for solving CSPs over affine domains, CSPs on finite groups with near subgroups, etc. The result of Bulatov and Dalmau is based on the algebraic fact that, given any Maltsev algebra $\mathbb{A}$, any subpower of $\mathbb{A}^n$ has a generating set of polynomial (in fact, linear) size in $n$. This approach was further generalized in \cite{BIMMVW} to show that a modification of Bulatov-Dalmau algorithm solves all CSPs over the so called domains \emph{with few subpowers}, i.e. for all the domains $\mathbb{A}$ with the property that any subpower of $\mathbb{A}^n$ has a generating set of polynomial size in $n$. 

The algorithms presented in \cite{bulatov2006simple} and \cite{BIMMVW} require explicit knowledge of the algebraic operations witnessing the few subpowers property. This, in itself, may be viewed as problematic, since, in practical applications, the CSP is generally presented in the form of its constraint language and computing the required term is a highly nontrivial problem in terms of its complexity. Secondly, those algorithms do not provide ``short'' proofs of unsatisfiability in the same way local consistency checks do. 

The algorithm we present in this article is largely driven by the attempts of the authors to address some of these issues and gain better understanding of how the structural theory of Maltsev varieties influences the solvability of Maltsev CSPs. The algorithm was also motivated by the desire to refine the existing results on the computational complexity of solving Maltsev CSPs and to provide evidence that such problems can be solved via iterated use of algorithms for solving systems of linear equations over rings of the form $\mathbb{Z}_k$, for $k\geq 2$, which are uniquely determined by the parametrizing Maltsev algebra.

From the point of view of complexity theory, the only complexity result pertaining to the Bulatov-Dalmau algorithm is that it operates in polynomial time. The algorithm is based on the fact that all constraints viewed in a natural way  as subalgebras of $\mathbb{A}^n$ are finitely generated and the algorithm orders the constraints $\{C_1,\ldots,C_l\}$, and it computes  generating sets of  $C_1$, $C_1\cap C_2$, etc, until a generating set (which may be empty) is computed for $C_1\cap\ldots\cap C_l$. The algorithm we present in this article is based on the logspace reduction of the instance to a binary instance on which a sequence of logspace reductions is performed, using logspace transducers which also have access to a specified number (depending only on $\mathbb{A}$) of \textsc{MOD}-logspace oracles. The algorithm presented here, as well as the proof of its membership in the complexity class \textsc{DET}, provides strong evidence that constraint satisfaction problems with Maltsev templates are expressible in the extensions of the first-order logic which extend the fixed-point logic and also have the ability to solve systems of linear equations over finite rings $\mathbb{Z}_p$, ($p$ - prime),  in which all equations involve at most two variables.

In addition, our algorithm only depends on the fact that the given relational template is known to be compatible with a Maltsev operation, without any need for the explicit use of such an operation besides the knowledge of 2-generated subalgebras and their congruences. For instance, the Graph Isomorphism Problem restricted to the bounded colour class instances is known to be logspace reducible to binary instances of group (and, therefore, Maltsev) CSPs, parametrized by groups of all permutations $S_l$, $l\geq 2$, on a finite set.  For such CSPs, algorithms which are based on the consistency checks in the relational template (based on e.g. reachability in an undirected graph) are more amenable,  from the computational complexity point of view, than those which make heavy use of the algebraic structure of the parametrizing algebra. 

\subsection{Organization of the paper}

In Section \ref{s: prelim}, we present the basic concepts and definitions related to CSPs, as well as some of the basic concepts from universal algebra which will be used extensively in the paper. In Section \ref{complexity}, we introduce the background notions from  computational complexity, which lead us to formulate  the main result  of the article, Theorem 3.2. In particular, we define the complexity classes $\textsc{MOD}_k\textsc{L}$ ($k\geq 1$) as well as the class \textsc{DET} of problems which can be $\textsc{NC}^1$-reduced to the problem of computing the determinant of an integer matrix. Section \ref{datalog} introduces the deterministic logspace consistency check, which will play an important role in the proof of the main result, namely, the notion of the canonical symmetric (1,2)-Datalog program. Finally, Section \ref{outline} contains a proof of the Theorem 3.2., consisting of the construction of an algorithm for solving syntactically simple binary instances of a Maltsev constraint satisfaction problem, which is in the complexity class \textsc{DET}.

\section{Preliminaries}\label{s: prelim}

\subsection{Constraint Satisfaction Problem}\label{CSP}
The notion of a constraint satisfaction problem provides us with a natural framework for a variety of problems which require simultaneous satisfiability of a number of conditions on a given set of variables. More formally,

\begin{definition} An \emph{instance} of the CSP is a triple $\mathcal{I}=(V,A,\mathcal{C})$, where $V=\{x_1,\ldots,x_n\}$ is a finite set of \emph{variables}, $A$ is a finite domain for the variables in $V$, and $\mathcal{C}$ is a finite set of \emph{constraints} of the form $C=(S,R_S)$, where $S$, the \emph{scope} of the constraint, is a $k$-tuple of variables $(x_{i_1},\ldots,x_{i_k})\in V^k$ and $R_S$ is a $k$-ary relation $R_S\subseteq A^k$, called the \emph{constraint relation} of $C$.

A \emph{solution} for the instance $\mathcal{I}$ is any assignment $f:V\rightarrow A$ such that, for every constraint $C=(S,R_S)$ in $\mathcal{C}$, $f(S)\in R_S$.
\end{definition}

From this point on, we will usually assume that the set of variables $V$ for an instance is an initial segment of the set of positive integers, i.e. $V=\{1,2,\ldots,n\},$
for some $n\geq 1$.

A relational structure $\mathbf{A}=(A, \Gamma)$, defined over the domain $A$ of the instance $\mathcal{I}$, where $\Gamma$ is a finite set of relations on $A$, is often referred to as a \emph{constraint language}, and the relations from $\Gamma$ form the signature of $\mathbf{A}$. An instance of $\CSP (\mathbf{A})$ will be an instance of the CSP such that all constraint relations belong to $\mathbf{A}$.

\subsection{Basic Algebraic Concepts}\label{sec:Algebra}
\noindent In this subsection, we introduce concepts from universal algebra which will be used in the remainder of the paper. Two good references for a more in-depth overview of universal algebra are \cite{Burris1981} and \cite{bergman}.

An \emph{algebra} is an ordered pair $\mathbb{A}=(A, F)$, where $A$ is a nonempty set, the \emph{universe} of $\mathbb{A}$, while $F$ is the set of \emph{basic operations} of $\mathbb{A}$, consisting of functions of arbitrary, but finite, arities on $A$. The list of function symbols and their arities is the \emph{signature} of $\mathbb{A}$. 

A \emph{subuniverse} of the algebra $\mathbb{A}$ is a nonempty subset $B\subseteq A$ closed under all operations of $\mathbb{A}$. If $B$ is a subuniverse of $\mathbb{A}$, by restricting all operations of $\mathbb{A}$ to $B$, such a subuniverse is a \emph{subalgebra} of $\mathbb{A}$, which we denote $\mathbb{B}\leq \mathbb{A}$.

In this article, we will be particularly interested in those subuniverses of the algebra, which have no non-trivial proper subuniverses themselves. Such subuniverses are called \emph{minimal} subuniverses, or \emph{minimal algebras.}

If $\mathbb{A}_i$ is an indexed family of algebras of the same signature, the product $\prod_i \mathbb{A}_i$ of the family is the algebra whose universe is the Cartesian products of their universes $\prod_i A_i$ endowed with the basic operations which are coordinate-wise products of the corresponding operations in $\mathbb{A}_i$. If $\mathbb{A}$ is an algebra, its $n$-th Cartesian power will be denoted $\mathbb{A}^n$.

An equivalence relation $\alpha$ on the universe $A$ of an algebra $\mathbb{A}$ is a \emph{congruence} of $\mathbb{A}$, if $\alpha \leq \mathbb{A}^2$, i.e. if $\alpha$ is preserved by all operations in the signature of $\mathbb{A}$.That is, if $f: A^k\rightarrow A$ is a $k$-ary ($k\geq 1$) operation in the signature of $\mathbb{A}$, then, if $(a_1,b_1),\ldots,(a_k,b_k)\in \alpha$, then $(f(a_1,\ldots,a_k),f(b_1,\ldots,b_k))\in \alpha$.

 In that case, one can define the algebra $\mathbb{A}/\alpha$, the \emph{quotient of} $\mathbb{A}$ \emph{by} $ \alpha$, with the universe consisting of all equivalence classes (cosets) in $A/\alpha$ and whose basic operations are induced by the basic operations of $\mathbb{A}$. The $\alpha$-congruence class containing $a\in A$ will be denoted $a/\alpha$. For an algebra $\mathbb{A}$, we denote the set of all congruences of $\mathbb{A}$, $\operatorname{Con}(\mathbb{A})$.

An algebra $\mathbb{A}$ is said to be \emph{simple} if its only congruences are the trivial, diagonal relation $0_\mathbb{A}=\{(a,a)\, \vert \, a\in A\}$ and the full relation $1_\mathbb{A}=\{ (a,b)\, \vert\, a,b\in A\}$. 

Minimal algebras, which are also simple algebras, are said to be \emph{strictly simple.} In the case of algebras all of whose operations are idempotent, it is easily seen that every congruence class is a subuniverse. Therefore, in the context of idempotent algebras, every minimal algebra is also simple, since it cannot have any congruences besides the trivial one and the full relation.

Any subalgebra of a Cartesian product of algebras $\mathbb{A}\leq \prod_i \mathbb{A}_{i\in I}$ is equipped with a family of congruences arising from projections on the product coordinates. We denote $\pi_i$ the congruence obtained by identifying the tuples in $A$ which have the same value in the $i$-th coordinate. Given any $J\subseteq I$, we can define a subalgebra of $\mathbb{A}$, $proj_J(\mathbb{A})$, which consists of the projections of all tuples in $A$ to the coordinates from $J$. If $\mathbb{A}\leq \prod_{i\in I} \mathbb{A}_i$ is such that $proj_i (\mathbb{A})=\mathbb{A}_i$, for every $i\in I$, we say that $\mathbb{A}$ is a \emph{subdirect product} and denote this fact $\mathbb{A}\leq_{sp} \prod_{i\in I} \mathbb{A}_i$.

Given an algebra $\mathbb{A}$, a \emph{term} is a syntactical object describing a composition of basic operations of $\mathbb{A}$. A \emph{term operation} $t^\mathbb{A}$ of $\mathbb{A}$ is the interpretation of the syntactical term $t(x_1,\ldots,x_m)$ as an $m$-ary operation on $A$, according to the formation tree of $t$.

 Let
$f$ be an $n$-ary operation on $A$ and let $k>0$. We write $f^{(k)}$
to denote the $n$-ary operation on $A^k$, obtained by applying $f$ coordinate-wise. That is, we define the $n$-ary operation $f^{(k)}$ on $A^k$ by
\[
f^{(k)}(\mathbf a^1,\dots,\mathbf
a^n)=(f(a^1_1,\dots,a^n_1),\dots,f(a^1_k,\dots,a^n_k)),
\]
for $\mathbf a^1,\dots, \mathbf a^n\in A^k$.

The notion of \emph{polymorphism} plays the central role in the 
algebraic approach to the $\CSP$. 

\begin{definition}
  Given a $\Gamma$-structure $\mathbf{A}$, an $n$-ary
  \emph{polymorphism} of $\mathbf{A}$ is an $n$-ary operation $f$ on
  $A$ such that $f$ preserves the relations of $\mathbf A$. That is,
  if $\mathbf{a}^1,\dots,\mathbf{a}^n\in R$, for some $k$-ary relation
  $R$ in $\Gamma$, then $f^{(k)}(\mathbf a^1,\dots,\mathbf
  a^n)\in R$.  
\end{definition}

Given any relational structure $\mathbf{A}$, we can associate to it the algebra $\mathbb{A}=(A; Pol(\mathbf{A}))$. For that reason, instances of $\operatorname{CSP}(\mathbf{A})$ can be viewed as a instances of the constraint satisfaction problem $\operatorname{CSP}(\mathbb{A})$, if we define the instances of  $\operatorname{CSP}(\mathbb{A})$, to be those $\mathcal{I}=(V,A,\mathcal{C})$ all of whose constraint relations are preserved by all operations in the signature of $\mathbb{A}$. In that case, we say that an instance of $\operatorname{CSP}(\mathbb{A})$ is \emph{parametrized} by the algebra $\mathbb{A}$.

\begin{definition} A ternary operation $m:A^3\rightarrow A$ on a finite set is said to be \emph{Maltsev} if it satisfies the algebraic identities $m(x,x,y)\approx m(y,x,x)\approx y.$
\end{definition}

Clearly, any Maltsev operation $m(x,y,z):A^3\rightarrow A$, since it satisfies $m(x,x,x)\approx x$, for all $x\in A$. Therefore, every singleton $\{x\}$ is a subalgebra of $\mathbb{A}=(A;\{m(x,y,z)\})$.

\begin{example}
A typical example of a constraint satisfaction problem over a finite Maltsev template is the problem of solving a system of linear equations in $n$ variables over a fixed finite field $K$. Let $S$ be its solution space viewed as an $n$-ary relation on $K$. The operation $m(x,y,z)=x-y+z$ is a polymorphism of the relational structure $\mathbf{S}=(K; S)$. The converse is also true: namely, one can show that any $n$-ary relation on $K$, for $n\geq 1$, which has $m(x,y,z)$ as its polymorphism is a solution of some system of linear equations over $K$ in $n$ variables.
\end{example}

Finally, we state some facts about subdirect products of Maltsev algebras which will be needed later.

The first fact concerns the connectivity in subdirect products of simple Maltsev algebras. For the proof, see e.g. \cite{Burris1981}

\begin{theorem} \label{maltsev} Let $\mathbb{A}_1,\ldots,\mathbb{A}_n$ be simple algebras in a Maltsev variety. If
\[\mathbb{B}\leq_{sp} \mathbb{A}_1\times\ldots\times\mathbb{A}_n\]
is a subdirect product, then
\[\mathbb{B}\cong \mathbb{A}_{i_1}\times\ldots\times\mathbb{A}_{i_k}\]
for some $\{i_1,\ldots,i_k\}\subseteq \{1,\ldots,n\}$. 

In particular, if $\mathbb{A}$ and $\mathbb{B}$ are two simple Maltsev algebras then any subdirect product $\mathbb{C}\leq_{sp} \mathbb{A}\times\mathbb{B}$ is either the direct product or the graph of an isomorphism $f:\mathbb{A}\rightarrow\mathbb{B}$.
\end{theorem}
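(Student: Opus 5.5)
We prove the binary case first and then obtain the general statement by induction on $n$.

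\emph{Binary case.} Let $\mathbb{C}\leq_{sp}\mathbb{A}\times\mathbb{B}$ with $\mathbb{A},\mathbb{B}$ simple in a Maltsev variety. Set $\eta_1=\pi_1|_C$ and $\eta_2=\pi_2|_C$, the kernels of the two projections. They are congruences of $\mathbb{C}$ with $\eta_1\wedge\eta_2=0_\mathbb{C}$. Since the projections are onto, the correspondence theorem gives $\operatorname{Con}(\mathbb{C}/\eta_1)\cong\operatorname{Con}(\mathbb{A})$. By simplicity of $\mathbb{A}$, $\eta_1$ is therefore a coatom of $\operatorname{Con}(\mathbb{C})$, and likewise $\eta_2$.

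There are two cases.
\begin{itemize}
\item If $\eta_1=\eta_2$, then $\eta_1=\eta_1\wedge\eta_2=0_\mathbb{C}$. Both projections are then injective, so $\pi_2\circ\pi_1^{-1}$ is an isomorphism $f:\mathbb{A}\to\mathbb{B}$ and $C$ is its graph.
\item If $\eta_1\neq\eta_2$, then $\eta_1\vee\eta_2=1_\mathbb{C}$, since both are coatoms. Congruences permute in a Maltsev algebra, so $\eta_1\circ\eta_2=1_\mathbb{C}$. Given $a\in A$ and $b\in B$, pick $c,d\in C$ with $c_1=a$ and $d_2=b$. Then $(c,d)\in\eta_1\circ\eta_2$, so some $e\in C$ satisfies $e_1=c_1=a$ and $e_2=d_2=b$. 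Hence $C=A\times B$.
\end{itemize}
The ingredient doing the real work here is congruence permutability. It follows directly from the Maltsev term: if $x\,\alpha\, y\,\beta\, z$, then $m(x,y,z)$ witnesses $x\,\beta\, m(x,y,z)\,\alpha\, z$.

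\emph{General case, by induction on $n$.} The case $n=1$ is trivial, since $\mathbb{B}=\mathbb{A}_1$. Suppose $n>1$ and let $\mathbb{B}'=proj_{\{1,\ldots,n-1\}}(\mathbb{B})$. Then $\mathbb{B}'\leq_{sp}\mathbb{A}_1\times\cdots\times\mathbb{A}_{n-1}$, so by induction $\mathbb{B}'\cong\prod_{j\in J}\mathbb{A}_j$ for some $J\subseteq\{1,\ldots,n-1\}$. This isomorphism is realized by the projection onto the coordinates in $J$. Moreover $\mathbb{B}\leq_{sp}\mathbb{B}'\times\mathbb{A}_n$. Let $\theta$ be the congruence of $\mathbb{B}$ given by the kernel of the projection to $\mathbb{A}_n$, and let $\eta$ be the kernel of the projection to $\mathbb{B}'$. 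Then $\eta\wedge\theta=0$, and $\theta$ is a coatom because $\mathbb{A}_n$ is simple.

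\emph{Case $\eta\leq\theta$.} Then $\theta\vee\eta=\theta$, so the $n$-th coordinate is a function of the first $n-1$, and $\mathbb{B}\cong\mathbb{B}'$.

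\emph{Case $\eta\not\leq\theta$.} Since $\theta$ is a coatom, $\eta\vee\theta=1$, and by permutability $\eta\circ\theta=1$. The argument of the binary case then gives $\mathbb{B}=\mathbb{B}'\times\mathbb{A}_n\cong\prod_{j\in J\cup\{n\}}\mathbb{A}_j$.

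In both cases $\mathbb{B}$ is isomorphic to a product of some of the $\mathbb{A}_i$, which completes the induction. The "in particular" clause is exactly the binary case.

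The main obstacle is recognizing that permutability gives $\eta\circ\theta=1$ and hence a full product; everything else is lattice bookkeeping. No commutator theory is needed, because simplicity of the factors forces the relevant kernels to be coatoms.
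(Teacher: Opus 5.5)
Your proof is correct. The paper gives no argument of its own here and only cites Burris--Sankappanavar, so the comparison is with the standard textbook proof, which yours essentially is: simplicity makes each projection kernel a coatom, the Maltsev term turns $\eta\vee\theta=1$ into $\eta\circ\theta=1$, and your induction in effect picks out an irredundant family of kernels that form factor congruences.
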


The existence of a Maltsev operation implies the following property on any subdirect product, which we will refer to as the \emph{rectangularity property}:

\begin{proposition}\label{rect} Let $C\leq_{sp} \mathbb{A}\times\mathbb{B}$, where $\mathbb{A}$ and $\mathbb{B}$ are algebras in a Maltsev variety. Then, the following holds: if $(a,b), (a,b'), (a',b')\in C$, then $(a',b)\in C$.
\end{proposition}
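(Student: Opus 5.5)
The plan is to apply a Maltsev term operation of the variety coordinatewise to the three given pairs and use the Maltsev identities to collapse each coordinate. Since $\mathbb{A}$ and $\mathbb{B}$ lie in a Maltsev variety, there is a ternary term $m(x,y,z)$ such that $m(x,x,y)\approx m(y,x,x)\approx y$ holds in every member of the variety. In particular it holds in $\mathbb{A}$, in $\mathbb{B}$, and in $\mathbb{A}\times\mathbb{B}$.

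The first step is to note that $C$, being a subuniverse of $\mathbb{A}\times\mathbb{B}$, is closed under every term operation of $\mathbb{A}\times\mathbb{B}$. The reason is that term operations are compositions of basic operations, and each basic operation preserves $C$. Term operations on the product act coordinatewise, so $m^{\mathbb{A}\times\mathbb{B}}=m^{(2)}$ in the notation introduced above, with $m^{\mathbb{A}}$ acting on the first coordinate and $m^{\mathbb{B}}$ on the second.

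The second step is to choose the order of the arguments so that each coordinate collapses. Take $(a',b')$, $(a,b')$, $(a,b)\in C$ in that order. Applying $m$ coordinatewise gives
\[
m\bigl((a',b'),(a,b'),(a,b)\bigr)=\bigl(m(a',a,a),\,m(b',b',b)\bigr)=(a',b).
\]
The first coordinate reduces to $a'$ by $m(y,x,x)\approx y$, and the second reduces to $b$ by $m(x,x,y)\approx y$. Closure of $C$ then yields $(a',b)\in C$.

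No step is a genuine obstacle. The only point needing care is the justification that a subuniverse is closed under term operations, not just basic ones, which follows by induction on the formation tree of the term. The subdirectness hypothesis is not used.
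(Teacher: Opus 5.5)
Your proof is correct and is the paper's argument: apply the Maltsev term coordinatewise to the three given pairs and collapse each coordinate with the Maltsev identities. The only difference is cosmetic. The paper feeds the pairs in the order $(a,b),(a,b'),(a',b')$, obtaining $(m(a,a,a'),m(b,b',b'))=(a',b)$, whereas you use the reverse order.
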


\begin{proof} Let $m(x,y,z)$ be a Maltsev operation on both $\mathbb{A}$ and $\mathbb{B}$. Then,
$$(a',b)=(m(a,a,a'), m(b,b',b'))\in C.$$
\end{proof}

Using Proposition \ref{rect}, one can prove a generalization of the second part of  Theorem \ref{maltsev}:

\begin{theorem} \label{link}  If $\mathbb{A}$ and $\mathbb{B}$ are two Maltsev algebras, with $\mathbb{B}$ being simple, then any subdirect product $\mathbb{C}\leq_{sp} \mathbb{A}\times\mathbb{B}$ is either the direct product or, there exists a maximal congruence $\theta\in\operatorname{Con}(\mathbb{A})$,  such that $C$ is the graph of an isomorphism $f:\mathbb{A}/\theta\rightarrow\mathbb{B}$.
In particular, the congruence $\theta$ is defined as follows: for $a,a'\in A$, $(a,a')\in \theta$ if, and only if, there exists $ b\in B $ such that $(a,b),(a',b)\in C.$
\end{theorem}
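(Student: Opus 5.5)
Define $\theta$ exactly as in the statement: $(a,a')\in\theta$ iff there is $b\in B$ with $(a,b),(a',b)\in C$. Symmetrically, let $\psi$ be the relation on $B$ given by $(b,b')\in\psi$ iff there is $a\in A$ with $(a,b),(a,b')\in C$. The plan has four steps: show $\theta$ and $\psi$ are congruences, use simplicity of $\mathbb{B}$ to split into two cases, and handle each case with Proposition \ref{rect}.

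\textbf{Step 1: $\theta$ and $\psi$ are congruences.} Reflexivity of $\theta$ follows from subdirectness, since every $a$ has some $b$ with $(a,b)\in C$. Symmetry is immediate. Transitivity is the key point. Suppose $(a,b),(a',b)\in C$ and $(a',b'),(a'',b')\in C$. Applying Proposition \ref{rect} to $(a',b),(a',b'),(a'',b')$ gives $(a'',b)\in C$. Hence $a$ and $a''$ share the witness $b$. Compatibility with the operations holds because $\theta$ is the relational composition $C\circ C^{-1}$, which is primitive-positive definable from $C$. Alternatively, apply an $n$-ary operation coordinatewise to the witnessing pairs, which all lie in $C\le\mathbb{A}\times\mathbb{B}$. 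The same argument applies to $\psi$ on $B$.

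\textbf{Step 2: case split.} Since $\mathbb{B}$ is simple, either $\psi=1_{\mathbb{B}}$ or $\psi=0_{\mathbb{B}}$.

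\textbf{Step 3: the case $\psi=1_{\mathbb{B}}$.} Take any $a\in A$ and $b\in B$. By subdirectness pick $b_0$ with $(a,b_0)\in C$. Since $(b_0,b)\in\psi$, there is $a_1$ with $(a_1,b_0),(a_1,b)\in C$. Applying rectangularity to $(a_1,b),(a_1,b_0),(a,b_0)$ gives $(a,b)\in C$. So $C=A\times B$.

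\textbf{Step 4: the case $\psi=0_{\mathbb{B}}$.} Each $a$ is then related to a unique $b$, so $C$ is the graph of a surjective function $g:A\to B$. Surjectivity comes from subdirectness. Because $C$ is a subuniverse, $g$ is a homomorphism. Its kernel is $\{(a,a'): g(a)=g(a')\}$, which is exactly $\theta$. The first isomorphism theorem then gives an isomorphism $f:\mathbb{A}/\theta\to\mathbb{B}$ whose graph is $C$, reading $C$ modulo $\theta$. Finally, $\theta$ is maximal: by the correspondence theorem, congruences above $\theta$ correspond to congruences of $\mathbb{A}/\theta\cong\mathbb{B}$, of which there are only two. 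It is also proper, provided $|B|>1$.

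\textbf{Main obstacle.} The only nontrivial step is transitivity of $\theta$ (and of $\psi$), which is where the Maltsev rectangularity of Proposition \ref{rect} is essential. One degenerate point also needs a remark: when $|B|=1$, both alternatives coincide, and $\theta=1_{\mathbb{A}}$ is not a proper congruence. The proof would either treat this as the direct-product case or note the convention on "maximal" being used.
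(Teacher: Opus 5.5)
Your proof is correct. It follows the route the paper indicates: the paper states Theorem \ref{link} without proof and says only that it can be derived from the rectangularity of Proposition \ref{rect}, so your argument (simplicity of $\mathbb{B}$ applied to the symmetric relation $\psi$, plus rectangularity) fills in exactly the missing details. One small redundancy: proving in Step 1 that $\theta$ is a congruence is unnecessary, since in the unlinked case $\theta$ arises as $\ker g$ and is automatically a congruence; only $\psi$ actually needs the transitivity argument.
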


In the first case in the statement of Theorem \ref{link}, we refer to the subdirect product as \emph{linked} while, in the second case, we will say that $C$ is \emph{unlinked}.

The algorithm presented in the article will be based on the structure of minimal (i.e. strictly simple) algebras in a Maltsev variety.  We may assume that such a variety is idempotent, with the only relevant operation being the idempotent ternary Maltsev term $m(x,y,z).$ Using the classification of strictly simple algebras implied by the results from \cite{hobbymckenzie}, the minimal algebras appearing in an idempotent Maltsev variety can be one of the two following types:

\begin{enumerate}
\item an algebra polynomially equivalent to a two-element Boolean algebra; or
\item a simple affine algebra, which is polynomially equivalent to a module over a finite ring. In particular, such algebras can be viewed as simple affine groups of cardinality $|p^k|$ ($p$-a prime, $k\geq 1$), under the affine group operation $m(x,y,z)=x-y+z$, generated by any two of its distinct elements.
\end{enumerate}

More specifically,  every minimal algebra in an idempotent Maltsev variety can be viewed as an Abelian $p$-group, once an element of the algebra is designated as the zero element 0, and the affine operation gives rise to the usual Abelian group operations: $x+y=m(x,0,y)$ and $-x=m(0,x,0)$. In the case of algebras polynomially equivalent to the 2-element Boolean algebra, there is also an underlying polynomial operation, which is equivalent to $\oplus$, the addition modulo 2.

\section{\textsc{MOD}-Logspace Classes}\label{complexity}

In this article, we will be primarily concerned with problems which can be placed in the computational complexity class  \textsc{L} (deterministic logspace) and the functions which can be computed using Turing machines operating in deterministic logspace and using oracles of counting nature, which are intimately related to the computational complexity of common problems in linear algebra over a finite ring.

A typical example of a problem which can be solved in deterministic logspace is the problem $(s,t)$-\textsc{UCONN}, the connectivity in undirected graphs, which, given two vertices $s$ and $t$ in the input graph, asks whether there exists a path between $s$ and $t$. This rather deep  fact will be used in the analysis of the complexity of the algorithm presented here and is due to Reingold (\cite{reingold2008undirected}.)

A \emph{logspace transducer} is a Turing machine with a read-only input tape, a write-only
output tape, and a worktape which can contain at most $\mathcal{O}(\log n)$ symbols at any time. For that reason, one can view a logspace transducer as a function $F$ mapping instances of an algorithmic problem $\mathfrak{P}_1$ into instances of an algorithmic problem $\mathfrak{P}_2$ so that, if $\mathcal{P}$ is an instance of the problem $\mathfrak{P}_1$, the Turing machine for the function $F$, computing the instance $\mathcal{J}=F(\mathcal{P})$ of $\mathfrak{P}_2$,  operates in deterministic logspace.

It is a fairly elementary fact from theory of computational complexity that a composition of two logspace transducers is again a logspace transducer. For a proof, see e.g. \cite{arora2009computational}.

In our algorithm, logspace transducers play an important role. The algorithm will be based on a series of reductions from an instance to another one and those reductions will be carried out by logspace transducers which have access to a particular type of oracle, which will place the complexity of our algorithm in a particular complexity class, closely connected to the complexity of problems in linear algebra over finite rings $\mathbb{Z}_k$, $(k\in\mathbb{Z})$.

To explain this connection, we need to introduce the notion of particular subclasses of logspace with counting, the classes $\textsc{MOD}_k\textsc{L}.$ 

The complexity class \#\textsc{L} consists of all computable functions $f:\mathbb{N}\rightarrow\mathbb{N}$, such that there is a nondeterministic logspace-bounded Turing machine $T$, which halts on every input and along every computation path, so that the number of accepting paths on input $x$ is $f(x)$. For an integer $k\geq 2$, the class $\textsc{MOD}_k\textsc{ L}$ is defined to be the class of sets $A$ for which a function $f(x)\in \#$ \textsc{L} exists, such that, $x\in A$ if, and only if $f(x)\not \equiv 0 \mod k$.

On the other hand, the complexity class \textsc{DET}  consists of all problems that are $\textsc{NC}^1$-reducible to the problem of computing a determinant with entries from $\mathbb{Z}$, the ring of integers. We will not present a full definition of $\textsc{NC}^1$-reductions here but it would suffice to say that every reduction, which can be carried out by a  logspace transducer is an $\textsc{NC}^1$-reduction. This is a direct consequence of the result from the complexity theory that $\textsc{L}\subseteq\textsc{NC}^1$ (see e.g. \cite{arora2009computational}.) We can consider proper subclasses of \textsc{DET}, $\textsc{DET}_k$, for every $k\geq 2$, which consist of all the problems, which can be $\textsc{NC}^1$-reduced to the problem of computing the determinant of a matrix over $\mathbb{Z}_k$.

Typical examples of problems which are complete for the class \textsc{DET} are the standard problems in linear algebra over $\mathbb{Z}$: rank computations, computing the determinant of a matrix, computing a solution of a linear system, computing a basis of a kernel of a linear transformation, computing the inverse of a matrix, etc. For the subclasses $\textsc{MOD}_k\textsc{ L}$, all these problems are still complete problems when relativized to matrices over $\mathbb{Z}_k$.

For more information on \textsc{MOD}-logspace classes and their properties, the reader is referred to \cite{buntrock1992structure} and \cite{hertrampf2000note}.

We will need the following fact:

\begin{theorem} (\cite{buntrock1992structure}) For $k\geq 2$,

$$\textsc{L}^{\textsc{MOD}_k\textsc{ L}}\subseteq \textsc{DET}_k.$$
\end{theorem}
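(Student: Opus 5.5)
The plan is to simulate a logspace oracle machine by a single determinant-type computation over $\mathbb{Z}_k$, in three steps.

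\emph{Step 1: characterize $\textsc{MOD}_k\textsc{L}$ via $\textsc{DET}_k$.} I would first show that every $\textsc{MOD}_k\textsc{L}$ predicate is $\textsc{NC}^1$-reducible to a determinant over $\mathbb{Z}_k$. The number of accepting paths of a nondeterministic logspace machine $T$ on input $x$ equals the number of $s$--$t$ paths in its configuration graph $G_x$. We may assume $G_x$ is acyclic, since $T$ halts on every path, for instance by adding a clock. The graph $G_x$ has polynomially many vertices, and its adjacency matrix $M_x$ is computable in logspace, hence in $\textsc{NC}^1$. The path count is the $(s,t)$ entry of $(I-M_x)^{-1}=\sum_j M_x^j$, which is a finite sum because $M_x$ is nilpotent. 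Since $\det(I-M_x)=1$, this entry is a cofactor, hence a determinant, and it can be reduced modulo $k$. This gives $\textsc{MOD}_k\textsc{L}\subseteq \textsc{DET}_k$.

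\emph{Step 2: handle oracle queries, which is the main obstacle.} A logspace oracle machine may ask adaptively many queries, and the query strings are written on a write-only tape. Composing $\textsc{NC}^1$ reductions naively therefore fails. My first approach is to use the Ruzzo--Simon--Tompa convention, under which the machine queries deterministically. Each query string is then a logspace-computable function of the input together with the sequence of previous answers.

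To remove the adaptivity, I would invoke the known closure results for $\textsc{MOD}_k\textsc{L}$. For prime powers, $\textsc{L}^{\textsc{MOD}_k\textsc{L}}=\textsc{MOD}_k\textsc{L}$; in general, the hierarchy collapses by the Hertrampf and Buntrock--Damm--Hertrampf--Meinel results. The idea behind these results is a path-counting construction. A nondeterministic machine guesses the oracle answers. It verifies a guessed "yes" answer by a $\#\textsc{L}$ count raised to the power $\varphi(k)$ (Fermat-style amplification, which makes the count congruent to $1$ for yes answers), and it verifies a "no" answer by the complementary count $1-f^{\varphi(k)}$. The product of these values over all queries gives a count $\equiv 1$ on the correct answer sequence and $\equiv 0$ on every wrong one. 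For composite $k$, I would use the CRT decomposition into prime-power moduli and handle each modulus separately. The delicate part is managing the number of queries, which is polynomial, while keeping the number of paths multiplicative.

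\emph{Step 3: combine.} Steps 1 and 2 express the oracle computation as a single $\textsc{MOD}_k\textsc{L}$, or $\textsc{DET}_k$, instance, or as a Boolean combination of boundedly many such instances via CRT. Applying Step 1 gives $\textsc{L}^{\textsc{MOD}_k\textsc{L}}\subseteq\textsc{DET}_k$. Finally, $\textsc{NC}^1$ reductions compose and $\textsc{L}\subseteq\textsc{NC}^1$, so the logspace preprocessing is absorbed into the reduction.
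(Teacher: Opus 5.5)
The paper gives no proof of this statement; it is quoted from Buntrock et al. Your argument therefore has to stand on its own, and it has a genuine gap in Step~2 as soon as $k$ is not prime.

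The gadget $f^{\varphi(k)}$ is $\equiv 1 \pmod k$ only when $\gcd(f,k)=1$. A ``yes'' answer, however, only means $f\not\equiv 0 \pmod k$.
\begin{itemize}
\item For $k=4$ and $f=2$ you get $f^{\varphi(4)}=4\equiv 0$, so a correct guess is killed.
\item For $k=6$ and $f=2$ you get $4$, which is neither $0$ nor $1$.
\end{itemize}
For $k=p^e$ you must first pass to $\textsc{MOD}_p\textsc{L}$, via $\binom{f}{p^i}\bmod p$ and Lucas' theorem, and only then use Fermat modulo $p$.

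When $k$ has two distinct prime factors, a ``yes'' answer is a disjunction of conditions modulo different $p_i^{e_i}$. No single-modulus count can certify such a disjunction. The CRT also does not decouple the computation: later queries depend on the combined answers to earlier ones, so polynomially many adaptive rounds may alternate between the primes. Hence your ``Boolean combination of boundedly many instances'' is unsupported. A collapse of $\textsc{L}^{\textsc{MOD}_k\textsc{ L}}$ to $\textsc{MOD}_k\textsc{L}$ is not available for such $k$: it would make $\textsc{MOD}_k\textsc{L}$ closed under complement, which is not known.

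The idea you are missing makes Step~2 unnecessary. A deterministic logspace oracle machine receives no answers while it writes a query. So each query is a function $q(x,c)$ of the input and of the logarithmic-size configuration $c$ in which writing begins. There are therefore only polynomially many potential queries, each logspace-computable from $(x,c)$; this is the Ladner--Lynch observation that logspace Turing and truth-table reducibility coincide. The proof then runs in two layers.
\begin{itemize}
\item First layer: compute in parallel the table $T(x)=\bigl([q(x,c)\in O]\bigr)_c$. Each entry is a $\textsc{MOD}_k\textsc{L}$ predicate of $(x,c)$, by recomputing bits of $q(x,c)$ on demand. By your Step~1 it is one oracle gate for $\det$ over $\mathbb{Z}_k$.
\item Second layer: run the machine on $(x,T(x))$ with table look-ups. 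This is a deterministic logspace computation, so it lies in $\textsc{L}\subseteq\textsc{MOD}_k\textsc{L}$ (zero or one accepting path) and costs one more such gate.
\end{itemize}
Two layers of oracle gates form an $\textsc{NC}^1$ reduction. This works for every $k\ge 2$, with no Fermat amplification and no CRT.

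Finally, your justification ``logspace, hence $\textsc{NC}^1$'' in Steps~1 and~3 has the inclusion backwards; the paper makes the same slip. In fact $\textsc{NC}^1\subseteq\textsc{L}$, and $\textsc{L}\subseteq\textsc{NC}^1$ is open. The fix has two parts. The adjacency matrix of a configuration graph is $\textsc{AC}^0$-computable, since whether $c\to c'$ is a legal move depends only on $c$, $c'$ and one input bit. Any logspace processing should be placed inside a counting machine, as above, rather than inside the circuit.
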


That is, every algorithmic problem which can be solved in logspace with the use of oracles which are in $\textsc{MOD}_k\textsc{ L}$ can be reduced to the problem of computing the determinant of a matrix over $\mathbb{Z}_k$,.

Our main result will be the following theorem:

\begin{theorem}  Let $\mathbb{A}$ be a finite Maltsev algebra. Then, $\operatorname{CSP}(\mathbb{A})\in \textsc{DET}$.
\end{theorem}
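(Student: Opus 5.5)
The plan is to show that $\operatorname{CSP}(\mathbb{A})$ can be decided by a logspace transducer making a bounded number (depending only on $\mathbb{A}$) of queries to $\textsc{MOD}_p\textsc{L}$ oracles, for primes $p$ dividing cardinalities of minimal affine subalgebras of $\mathbb{A}$. By the theorem of \cite{buntrock1992structure}, $\textsc{L}^{\textsc{MOD}_k\textsc{L}}\subseteq\textsc{DET}_k\subseteq\textsc{DET}$. Since logspace reductions are $\textsc{NC}^1$-reductions and \textsc{DET} is closed under them, a constant-depth composition of such stages stays in \textsc{DET}. Taking $k$ to be the product of the relevant primes lets me combine them, since $\textsc{MOD}_p\textsc{L}$ oracles can be simulated through $\mathbb{Z}_k$-linear algebra.

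The first step is a logspace reduction of an arbitrary instance to a \emph{syntactically simple binary instance}. I would introduce a new variable for each constraint, ranging over the constraint relation viewed as a subalgebra of $\mathbb{A}^n$ (of bounded arity, hence finite and of bounded size). I would then add binary projection constraints linking it to the original variables. The new domain is a finite Maltsev algebra in $HSP(\mathbb{A})$, still of bounded size, so only finitely many algebras occur. Next I run the canonical symmetric $(1,2)$-Datalog consistency check of Section \ref{datalog}. It is in \textsc{L} by Reingold's theorem, because the propagation is reachability in an undirected graph whose vertices are pairs (variable, 2-generated subuniverse). Symmetry is justified by the rectangularity property (Proposition \ref{rect}), since binary Maltsev constraints are unions of "rectangles" and therefore invertible. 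After this step every domain is shrunk to a subuniverse and every binary constraint is subdirect.

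The core step is an induction on the size of the domains. I take each variable's domain $D_x$ and a maximal congruence $\theta_x$, computable from the finite table of $\mathbb{A}$, so that $D_x/\theta_x$ is simple. By Theorem \ref{link}, each binary constraint between the quotients is either linked (the full product) or the graph of an isomorphism. The isomorphism constraints split the variables into connected components, which can be found in \textsc{L} via \textsc{UCONN}. Within a component, the quotient problem is a system of "permutation" constraints. In the Boolean-type case this reduces to propagating a choice. In the affine case, the minimal subalgebras are $p$-groups where the constraints become $x_j=\phi(x_i)+c$, i.e.\ two-variable linear equations over $\mathbb{Z}_p$ (or $\mathbb{F}_{p^k}$ as a $\mathbb{Z}_p$-module). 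Solvability of such a system, and a witness solution, can be obtained with one $\textsc{MOD}_p\textsc{L}$ oracle round. Fixing the solution's cosets restricts each domain to a $\theta_x$-class, which is a proper subuniverse by idempotence, and I recurse. The recursion depth is bounded by the length of maximal chains of subuniverses of the bounded-size algebras. Each level is a logspace transducer with oracle access, so the composition stays in $\textsc{L}^{\textsc{MOD}_k\textsc{L}}$.

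The main obstacle is \textbf{correctness of committing to one quotient solution}. A solution of the quotient problem need not extend to the full instance, and different components may interact through linked constraints. I would try to show that, after $(1,2)$-consistency, the set of quotient solutions that extend forms a coset of a subgroup. I would prove this using the Maltsev term to shift solutions, in the spirit of rectangularity. Any extendable coset could then be found by adding the linear constraints forced by the lower-level consistency, at the cost of iterating the consistency-plus-linear-algebra loop a bounded number of times. Making this ``coset of extendable solutions'' claim precise without an explicit Maltsev operation, using only 2-generated subuniverses, is where I expect most of the work to lie.
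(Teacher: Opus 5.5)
Your route differs from the paper's: you go top-down through maximal congruences $\theta_x$ and simple quotients, while the paper goes bottom-up through minimal subuniverses $\operatorname{Sg}(a,b)$ and the triple graph. As written, though, it has a genuine gap exactly where you suspect. The step ``solve the quotient system, fix one solution, restrict each domain to the chosen $\theta_x$-class, recurse'' is unsound, and the proposed repair is not a mechanism.

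Concretely, in your own dual encoding take $x_1+x_2+x_3=0$, $x_2+x_4+x_5=0$, $x_3+x_4+x_5=1$ over $\mathbb{Z}_2$. This system is satisfiable, e.g.\ by $(1,0,1,0,0)$, and it is $(1,2)$-consistent with all domains full. Each equation-variable has domain $\cong\mathbb{Z}_2^2$, whose three maximal congruences are the kernels of its three coordinate projections. Whichever you choose, on the quotients that equation-variable is joined by an isomorphism constraint to exactly one of its variables and by a full product to the other two. The isomorphism components are therefore stars, so every value of $x_1$ is consistent at the quotient level. Yet adding the equations forces $x_1=1$, and nothing in your procedure prevents committing to $x_1=0$, which leaves an unsatisfiable instance. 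The extendable values of $x_1$ do form a coset, namely $\{1\}$, but finding it is a rank computation over $\mathbb{Z}_2$, which is the original problem. So ``add the forced linear constraints and iterate boundedly often'' is not an argument.

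This is where all the counting power must be spent, and your proposal spends it elsewhere. A system of two-variable bijection constraints needs no oracle. It is decided in \textsc{L} by \textsc{UCONN} on the graph of pairs (variable, quotient value): a root value extends to a solution of that system iff its connected component contains no two distinct values of one variable. So, granting (as you and the paper do) that the $(1,2)$-consistency step is in \textsc{L}, everything you actually specify is a bounded composition of logspace steps. If blind commitment were sound, linear systems over $\mathbb{Z}_2$ would be in \textsc{L}, giving $\textsc{L}=\textsc{MOD}_2\textsc{L}$.

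Separately, the Boolean/affine dichotomy you invoke is Hobby--McKenzie's classification of minimal algebras. A simple quotient $D_x/\theta_x$ can be non-abelian, e.g.\ a non-abelian simple group under $xy^{-1}z$.

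The paper avoids blind commitment. It works with minimal subuniverses so that the dichotomy applies. Before any restriction is used, it subjects each candidate $(i,a,b)\in L$ to the $(i,a,b)$-test: restrict $P_i$ to $\operatorname{Sg}_{P_i}(a,b)$, rerun the symmetric $(1,2)$-Datalog program, and check the mod-$p$ count of triple-graph connections. Failures are deleted until $L$ stabilizes. Only then does it restrict domains one at a time to surviving minimal subuniverses, invoking Claim~\ref{group} to argue that no new obstruction appears. Your proposal has no counterpart to this test-before-commit filter, so it does not yet prove the theorem.
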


The proof will be based on the construction of an algorithm for $\operatorname{CSP}(\mathbb{A})$ which consists of the composition of a polynomial number of logspace transducers, which utilize oracles in $\textsc{MOD}_{k_1}\textsc{ L}, \ldots, \textsc{MOD}_{k_m}\textsc{ L}$, where $k_1,\ldots,k_m$ are integers which are completely determined by $\mathbb{A}$.

\section{Symmetric Datalog and (1,2)-Consistency}\label{datalog}

A \emph{Datalog program} for a relational template $\mathbf{A}$ is a finite set of rules of the form $T_0\leftarrow T_1,T_2,\ldots, T_n$ where $T_i$'s are atomic formulas. 
$T_0$  is the \emph{head} of the rule, while $T_1,T_2,\ldots, T_n$ form the \emph{body} of the rule.
Each Datalog program consists of two kinds of relational predicates:
the \emph{intentional} ones (IDBs), which are those occurring at least once in the head of some rule, and which are not part of the original signature of the template (they are derived by the computation).
The remaining predicates are said to be the \emph{extensional} ones, or EDBs. They are relations from the signature of the template and do not change during computation; i.e. they cannot appear in the head of any rule.
In addition to those, there is one special, designated IDB, which is nullary (Boolean) and referred to as the \emph{goal} of the program. The semantics of Datalog programs are generally defined
in terms of fixed-point operators. 

 A rule $T_0\leftarrow T_1,T_2,\ldots, T_n$ is said to be \emph{linear} if at most one atomic formula in its body is an IDB. A Datalog program is said to be \emph{linear} if all its rules are linear. The evaluation of a linear Datalog program is in nondeterministic logspace since, from the ccomputational complexity point of view, it reduces to repeated connectivity checks in a finite directed graph corresponding to the program.The linear rules in which an IDB appears in the body are said to be \emph{recursive}.

The \emph{symmetric complement} of a recursive linear rule  $T_0\leftarrow T_1,T_2,\ldots, T_n$ in which, without loss of generality,  the IDBs are $T_0$ and $T_1$, is defined to be the rule
$$T_1\leftarrow T_0,T_2,\ldots, T_n.$$ If the rule is non-recursive, its symmetric complement is the rule itself.

A linear Datalog program is \emph{symmetric}, if the symmetric complement of every rule also appears in the program.

Given a CSP instance $\mathcal{I}=(V, A, \mathcal{C})$ over a relational template $\mathbf{A}$, its \emph{canonical Symmetric}  (1,2)-\emph{Datalog program} has a unary IDB, $P_i(x)$, for each domain $P_i\subseteq A$ in the instance, with  the said IDBs being the only IDBs in the program. The program allows derivation rules with the body involving at most two variables along with their symmetric complements. 1 indicates that the program is deriving facts about unary relations only, while $2$ indicates that the maximal number of distinct variables in any rule is 2.  

If the instance $\mathcal{I}$ is such that:

\begin{enumerate}
\item all constraints are binary; and
\item for every pair of variables $i,j\in V$, there exists a unique binary constraint $E_{i,j}\in \mathcal{C},$ so that $E_{i,j}=E_{j,i}^{-1}$
\end{enumerate}

if the canonical symmetric (1,2)-Datalog program does not derive a contradiction (empty instance), all constraints  in the derived instance will be subdirect products. This is due to the fact that the only recursive rules involving unary IDBs $R_i, R'_i$ and $R_j, R'_j$, defined  on $P_i$ and $P_j$, respectively, are of the form 
$R_j (x)\leftarrow E_{i,j}(y,x),R_i(y)$, $R'_j(x)\leftarrow E_{j,i}(x,y),R'_i(y)$, along with their symmetric complements.

Finally, the complexity of computing the canonical symetric (1,2)-Datalog program for any (not necessarily binary) instance $\mathcal{I}$ of $\operatorname{CSP}(\mathbf{A})$ is in the deterministic logspace (by \cite{reingold2008undirected}) since the run of the program can be viewed as a sequence of connectivity checks in an  undirected graph, associated with $\mathcal{I}$. This fact will play an important role in the construction of a \textsc{DET} algorithm for solving instances of CSPs parametrized by finite Maltsev algebras.

\section{The Main Algorithm: A Proof of Theorem 3.2}\label{outline}

The construction of the algorithm is substantially motivated by a similar algorithm in \cite{delic2025constraint}. In that article, an algorithm was presented, which placed the complexity of the general CSP over finite templates in the class \textsc{MOD}, in the special case when the multisorted binary structure (template), whose construction will be presented in Subsection \ref{Binary} is a homomorphism core.  After that, an exhaustive analysis of 2-generated subuniverses which appear inside the sorts (domains) of the binary structure, reduces the problem to the case when all 2-generated subuniverses that potentially contain a solution have a simple affine quotient in addition to possible trivial subuniverses which may contain solutions to the instance. After that reduction,  the notion of \#\textsc{L}-consistency is developed and it is shown that a sequence of such consistency checks leads either to an empty instance or one in which every element contains a solution.  Since all consistency checks were in the deterministic logspace with counting, the entire algorithm had the same complexity. In fact, the algorithm was in the class \textsc{DET}, which, roughly speaking, encompasses those problems that can be solved by computing determinants over finite rings $\mathbb{Z}_k$, for some $k\geq 2$.

Here, we will present a simplified version of this algorithm, which will utilize $\textsc{L}$-consistency,  but the sequence of consistency checks will be limited to minimal subuniverses, and not all 2-generated subuniverses. A significant improvement on the original algorithm is the fact that we can completely dispense with the requirement that the binary multisorted template be a core.  This is due to the fact that subdirect products of two simple Maltsev algebras have a very predictable, symmetric structure.

Let $\mathcal{I}$ be an instance of $\operatorname{CSP}(\mathbb{A})$, where $\mathbb{A}$ is a finite Maltsev algebra. 

The first step in the algorithm reduces the instance $\mathcal{I}$ to a syntactically simple binary instance $\mathcal{P}$, parametrized by the Maltsev algebra $\mathbb{A}'=\mathbb{A}^{\lceil\frac{p}{2}\rceil}$, where $p$ is the maximal arity of a relation in $\Gamma$, the relational signature of the template $\mathbf{A}$. 

\subsection{Reduction to the Binary Case}\label{Binary}

In this subsection, we outline a reduction of an instance $\mathcal{I}$ of $\operatorname{CSP}(\mathbf{A})$, where $\mathbb{A}$ is a finite idempotent algebra, to a binary instance over a binary relational template, parametrized by $\mathbb{A}^m$, for some $m\geq 1$. The construction is due to L. Barto and M. Kozik and we largely adhere to their exposition in \cite{b-k1}. The reduction is  given by first-order (in fact, quantifier-free) formulas in a bounded number of variables, and can be carried out in deterministic logarithmic space.

An instance $\mathcal{P}$ is said to be \emph{syntactically simple} if it satisfies the following conditions:

\begin{itemize}
\item Every constraint in $\mathcal{C}$  is binary and its scope is a pair of  variables $(x,y)\in V^2$.
\item For every pair of variables $x,y$, there is precisely one constraint $E_{x,y}$ with the scope $(x,y)$.
\item If $x=y$, then $E_{x,x}=\{(a,a)\,\vert\, a\in P_x\}$, where $P_x$ is the $x$-th domain.
\item If $(x,y)$ is the scope of $E_{x,y}$, then $(y,x)$ is the scope of the constraint $E_{y,x}=\{(b,a) \, \vert\, (a,b)\in E_{x,y}\}$ (\emph{symmetry of constraints}).
\end{itemize}

Given any finite algebra $\mathbb{A}$ parametrizing the instance $\mathcal{I}$ such that the maximal arity of a relation in $\mathcal{I}$ is $p$, we define a new, syntactically simple instance $\mathcal{P}$ in the following way:

\begin{itemize} 
\item The instance is parametrized by $\mathbb{A}^{\lceil \frac{p}{2}\rceil}$, which is an algebra satisfying all term identities (equations) $s\approx t$, satisfied by $\mathbb{A}$. 
\item For every $\lceil\frac{p}{2}\rceil$-tuple of variables in $\mathcal{I}$, we introduce a new variable in $\mathcal{P}$ and, if $x=(x_1,\ldots,x_{\lceil\frac{p}{2}\rceil})$ and $y=(y_1,\ldots,y_{\lceil\frac{p}{2}\rceil})$ with $x\neq y$, we introduce a constraint 
\begin{align*}
E_{x,y} &=\{((a_1,\ldots,a_{\lceil\frac{p}{2}\rceil}),(b_1,\ldots,b_{\lceil\frac{p}{2}\rceil}))\,\vert \, \\  & (a_1,\ldots,a_{\lceil\frac{p}{2}\rceil},b_1,\ldots,b_{\lceil\frac{p}{2}\rceil})\\
&\mbox{ is a $p$-assignment of values which satisfies all  }\\ & \mbox{ atomic formulas 
on the tuples of variables } x,y \}
\end{align*}
while, if $x=y$, the relation $E_{x,x}$ is simply the equality of $\lceil\frac{p}{2}\rceil$-tuples in $\mathbb{A}^{\lceil\frac{p}{2}\rceil}$.
\end{itemize}

The binary instance $\mathcal{P}$ constructed in this way will have a solution if, and only if, the instance $\mathcal{I}$ has a solution. 

From the reduction described above, it is easily seen that, if $\mathcal{I}$ is an instance parametrized by a finite algebra $\mathbb{A}$, then the constructed, syntactically simple binary instance $\mathcal{P}$ can be parametrized by the direct product $\mathbb{A}^{\lceil\frac{p}{2}\rceil}$. In particular, if the original instance $\mathcal{I}$ is parametrized by a Maltsev algebra, then so is $\mathcal{P}$.

Also, based on the discussion in Section \ref{datalog}, if $\mathcal{P}$ is a syntactically simple binary instance of $\operatorname{CSP}(\mathbb{A})$, the instance $\mathcal{P}^\ast$ of $\operatorname{CSP}(\mathbb{A}')$ computed by the canonical symmetric (1,2)-Datalog program for $\mathcal{P}$ will have the property that all its constraints will be subdirect products.

From this point on, we will adhere to the following notation:

\begin{itemize}
\item the set of variables of $\mathcal{P}$ will be denoted $V$ and it will be an initial segment of the set of positive integers: $V=\{1,\ldots,n\}=[n]$, for some $n\geq 1$. We also assume that $V$ is an ordered set, via the natural ordering $<$  of positive integers.
\item the domains of $\mathcal{P}$, which are all subuniverses of $\mathbb{A}'$, constructed during the reduction, will be denoted $P_i$ ($1\leq i\leq n$), with the convention that $P_i$ is the domain corresponding to the variable $i\in V$.
\end{itemize}

We may also assume that, as a part of the input, the algorithm has access to the catalogue of all minimal,2-generated subalgebras $\operatorname{Sg}(a,b)$, of $\mathbb{A}'$, where $a\neq b, a,b\in A'$. We will also use the notation $\operatorname{Sg}_i(X)$ to denote the subuniverse of $\mathbb{P}_i$, generated by a subset $X$ of $P_i= \mathbb{A}'$.

%The proof of the existence of the logspace algorithm using finitely many oracles of the type $\textsc{MOD}_k\textsc{L}$ (and, therefore, in \textsc{DET}) for solving syntactically simple binary instances $\mathcal{P}$ of $\operatorname{CSP}(\mathbb{A}')$ will be inductive on the $s=\max \{|P_i| \, : \, i\in V\}$. The case $s=1$, i.e. when all domains are singletons or empty sets is trivial.

%We assume that there exists a logspace algorithm $\mathcal{A}_{s-1}$ with access to finitely many oracles of the type $\textsc{MOD}_k\textsc{L}$,  which decides all syntactically simple binary instances of $\operatorname{CSP}(\mathbb{A}')$ in which all domains have size at most $s-1$, for some $s\geq 2$ and show how to construct an algorithm $\mathcal{A}_s$ deciding instances whose domains are of size at most $s$ and which is in \textsc{DET}.

\subsection{Triple Graph of a Syntactically Simple Binary Instance}\label{graph}

Let $\mathcal{P}$ be a syntactically simple binary instance of $\operatorname{CSP}(\mathbb{A}')$, where $\mathbb{A}'$ is a finite Maltsev algebra. To every such instance, we can associate an undirected graph $\mathbf{G}_{\mathcal{P}}$, defined in the following way:

\begin{itemize}
\item the vertices of $\mathbf{G}_{\mathcal{P}}$ are the triples $(i,a,b)$, such that $i\in V$ and $a,b\in P_i$, $a\neq b$.
\item let $a,b\in P_i$, ($a\neq b$) and $c,d\in P_j$ ($c\neq d$), with $i\neq j$. Then, $(i,a,b)$ and $(j,c,d)$ will form an edge  in $\mathbf{G}_{\mathcal{P}}$ if, and only if the following is true:
$$(a,c),(b,d)\in E_{i,j}, \quad \mbox{ and } \quad (a,d),(b,c)\not\in E_{i,j}.$$
[or, equivalently, $(c,a),(d,b)\in E_{j,i}$, and  $(d,a),(c,b)\not\in E_{j,i}.$]
\end{itemize}

From the definition of the edge relation in $\mathbf{G}_{\mathcal{P}}$,  it is easily seen that $(i,a,b)$ and $(j,c,d)$ form an edge in $\mathbf{G}_{\mathcal{P}}$ if, and only if, $(i,b,a)$ and $(j,d,c)$ form an edge. For that reason, we can identify the vertices $(i,a,b)$ and $(i,b,a)$, for any $i\in V$ and $a,b\in P_i$, with $a\neq b$. In the remainder of the paper, we will tacitly assume that such an identification has been made.

The graph $\mathbf{G}_{\mathcal{P}}$ can be constructed in logspace, given the instance $\mathcal{P}$. If $(i,a,b)$ is a vertex of $\mathbf{G}_{\mathcal{P}}$, its connected component in this triple graph will be denoted $C(i,a,b)$.

We point out that every edge of the graph $\mathbf{G}_{\mathcal{P}}$ represents a subdirect product of two 2-generated subuniverses of the Maltsev algebra $\mathbb{A}'$. In particular, every edge in the triple graph represents an unlinked subdirect product of two such algebras. In the case when $\operatorname{Sg}_i(a,b)$ and $\operatorname{Sg}_j(c,d)$ are both simple algebras, the edge between $(i,a,b)$ and $(j,c,d)$ implies the existence of an isomorphism between the two algebras mapping $a$ to $c$ and $b$ to $d$.

The motivation for the definition of the triple graph associated to a syntactically simple binary instance  was provided by \cite{egri2014space}, where a similar definition of the notion of a triple digraph associated with a conservative digraph was used to prove that CSPs over conservative digraph templates which admit Hagemann-Mitschke terms can be solved in deterministic logspace.

\subsection{Construction of the Algorithm}

In this subsection, we will present the algorithm for solving Maltsev constraint satisfaction problems over finite Maltsev templates, whose construction and a proof of correctness will take up the remainder of the article. We have already shown that, given an instance $\mathcal{P}$ of $\operatorname{CSP}(\mathbb{A})$, where $\mathbb{A}$ is a finite Maltsev algebra, the problem of determining whether $\mathcal{P}$ contains a solution can be logspace reduced to the problem of determining whether a syntactically simple binary instance $\mathcal{P}$ of $\operatorname{CSP}(\mathbb{A}')$, where $\mathbb{A}'=\mathbb{A}^{\lceil \frac{p}{2}\rceil}$ and $p$ is the maximal arity of a constraint relation in $\mathcal{P}$. Furthermore, we may assume that the first step of the algorithm conists of the application of the canonical symmetric (1,2)-Datalog program to $\mathcal{P}$ and that the instance is (1,2)-consistent, assuming the output of the program is a nonempty instance $\mathcal{P}'$. If the output is the empty instance, the algorithm terminates with the output `\textsc{No solution}'. Thus, we may assume that, if the instance $\mathcal{P}'$ is nonempty, $\mathcal{P}=\mathcal{P}'$. The canonical symmetric (1,2)-program operates in deterministic logspace, therefore, the complexity of the algorithm at this stage is in deterministic logpsace.

As the next step, we construct the triple graph $\mathbf{G}_{\mathcal{P}}$, associated with the instance $\mathcal{P}$, as described in Subsection \ref{graph}. The construction of $\mathbf{G}_{\mathcal{P}}$ can be carried out in logspace.

%The algorithm augments the language by introducing, for every $i\in V$, a new binary predicate $S_i(x,y)$, which are initialized to be $S_i=P_i\times P_i$.

Given an original instance of a CSP with a Maltsev template, as we have shown earlier, it can be reduced to a syntactically simple instance (multiconsistency graph) $\mathcal{P}$ in deterministic logspace. In fact, this reduction is a first-order, quantifier-free one. Next, we show how to interpret the multiconsistency graph, along with the triple graph into a single structure, from which both structures can be recovered, and which will be an input into our algorithm, whose complexity will be shown to be in $\#\textsc{L}$. 

The domain of the new instance will consist of triples of the form $(i,a,b)$, where $1\leq i\leq n$ and $a,b\in A$, where we allow for the possibility that $a=b$. The triples of the form $(i,a,a)$, where $a\in A$, will correspond to the elements of the original syntactically simple instance $\mathcal{P}$. Within the set of the triples of the new structure, they can be defined as those triples $(i,x,y)$, which satisfy the quantifier-free formula $x=y$. Similarly, one can define binary constraint relations $E_{i,j}$ among them, using the corresponding relations from $\mathcal{P}$. Finally, the edges of the triple graph can be defined from the original binary constraints $E_{i,j}$ ($1\leq i,j\leq n$), again using quantifier-free first-order formulas. Clearly, this reduction can be accomplished in deterministic logspace, since it is given by a quantifier-free first-order interpretation. From now on, we will refer to the structure encoding both the syntactically simple instance and its accompanying triple graph as $\mathcal{P}$.

We may assume that we are also given a listing of all elements of $\mathcal{P}$, which either represent vertices of the syntactically simple instance or a minimal subuniverse of $\mathbb{A}$: $L=\{(i,a,b)\, : \, i\in[n], a,b \in P_i\}$., where either $a=b$ or $Sg_{P_i}(a,b)$ is a minimal subuniverse.

The key ingredient of our algorithm is a particular consistency check, the $(i,a,b)$-\emph{test}, whose execution complexity is in the deterministic logspace with counting, or, more specifically in the class $\textsc{MOD}_k\textsc{L}$.  The algorithm examines all triples $(i,a,b)\in L$, and,  if the $(i,a,b)$-test fails for the triple on which the test has been applied, the triple is removed from the list $L$. The sequence of tests continues until either (1) the list $L$ becomes empty and the instance has no solutions; or (2) the list undergoes no further changes. In the proof of the algorithm correctness, we will show that, in case (2), a solution exists through every element of the pairs remaining in the list $L$.

\subsection{$(i,a,b)$-Test}

Let $(i,a,b)\in L$ be a triple., with $a\neq b$. The $(i,a,b)$-test stars by constructing the \emph{group} $J_{i,a,b}$ of $(i,a,b)$, in the following way: let $j\in V$  be such that, for some $(j,c,d)\in P_j$, $(c\neq d)$, there exists a path in the triple graph from $(i,a,b)$ to $(j,c,d)$. Then, $j\in J_{i,a,b}$. From the fact that the existence of a path in an undirected graph can be verified in deterministic logspace, we get the following proposition.

\begin{proposition} For every $i\in V$, $a, b\in P_i$ $(a\neq b)$, the group $J_{i,a,b}$ of $(i,a,b)$ can be computed in deterministic logspace. 
\end{proposition}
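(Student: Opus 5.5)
The plan is to present the computation of $J_{i,a,b}$ as a logspace transducer that loops over candidate vertices and, for each one, asks an undirected reachability question in the triple graph $\mathbf{G}_{\mathcal{P}}$. The key ingredients, in order, are: (1) adjacency in $\mathbf{G}_{\mathcal{P}}$ is decidable in logspace from the input structure $\mathcal{P}$; (2) Reingold's theorem \cite{reingold2008undirected} decides $(s,t)$-\textsc{UCONN} in $\textsc{L}$; (3) logspace transducers compose.

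First, I fix the encoding. A vertex $(j,c,d)$ of $\mathbf{G}_{\mathcal{P}}$ needs $O(\log n)$ bits: a pointer to $j\in[n]$ plus a constant number of bits for $c,d\in A'$, since $|A'|$ depends only on $\mathbb{A}$ and $p$. Membership $c,d\in P_j$ and the edge test are read off the constraint relations $E_{i,j}$. The edge condition $(a,c),(b,d)\in E_{i,j}$ and $(a,d),(b,c)\notin E_{i,j}$ is quantifier-free, so a logspace machine can answer adjacency queries on the fly. This is consistent with the earlier remark that $\mathbf{G}_{\mathcal{P}}$ is constructible in logspace. Identifying $(j,c,d)$ with $(j,d,c)$ does no harm: reachability is unchanged, because the edge relation respects the swap.

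Next, the algorithm for $J_{i,a,b}$. Iterate over $j\in V$ and over ordered pairs $c\neq d$ in $P_j$; there are constantly many pairs per $j$. For each candidate, call the Reingold procedure with source $(i,a,b)$ and target $(j,c,d)$ on the implicitly given graph $\mathbf{G}_{\mathcal{P}}$. If some pair is connected, write $j$ to the output tape. The work tape holds only the loop counters, the current candidate, and the workspace of the reachability subroutine, all $O(\log n)$. Running Reingold's algorithm on a graph whose adjacency is itself computed by a logspace transducer is exactly composition of logspace transducers, which stays in logspace.

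The only point needing care is whether the graph must be "given" explicitly for Reingold's theorem. It need not be: the composition lemma cited from \cite{arora2009computational} lets us recompute each bit of the encoding of $\mathbf{G}_{\mathcal{P}}$ on demand. I therefore expect no genuine obstacle; the proof is a direct bookkeeping argument.
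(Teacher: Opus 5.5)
Your proposal is correct and follows the same route as the paper. The paper obtains the proposition directly from Reingold's logspace algorithm for undirected reachability applied to the logspace-constructible triple graph $\mathbf{G}_{\mathcal{P}}$; you simply spell out what it leaves implicit, namely the loop over candidate vertices $(j,c,d)$, the constant-size encoding of elements of $A'$, and the composition of logspace transducers.
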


After the group $J_{i,a,b}$ of $(i,a,b)$ has been computed, along with the restriction of the triple graph to the domains in $J_{i,a,b}$ (we tacitly assume that $i\in J_{i,a,b}$):

\begin{enumerate}
\item Run the (1,2)-symmetric Datalog program on the instance $P'$, in which, $P'_i$ is $P_i$ restricted to $\operatorname{Sg}_{P_i}(a,b)$, where we allow for the possibility that $a=b$. If the program results in an empty instance, $(i,a,b)$-test fails and $(i,a,b)$, $(i,a,a)$, and $(i,b,b)$ are removed from $L$ and both $a$ and $b$ are deleted from $P'_i$; or
\item If the (1,2)-program succeeds and terminates in a non-empty, nontrivial  (1,2)-consistent subinstance, whose $i$-th domain is $\operatorname{Sg}_{P_i}(a,b)$, where $a\neq b$, with $|Sg_{P_i}(a,b)|=p^k$, where $p$ is a prime and $k\geq 1$, consider the triple graph restricted to domains of the resulting instance, whose indices are in $J_{i,a,b}$.  For every $c\in P'_j$ ($j\in P_{i,a,b}$, $j\neq i$), count the number $l$ of elements of $P'_i$ connected to it in the triple graph. If $l\equiv 1 (\mod p)$, the triple $(i,a,b)$ has passed the test. Otherwise, if $l\equiv 0 (\mod p)$, we delete $(i,a,b)$ from the list $L$; otherwise
\item If the (1,2)-symmetric Datalog succeeds, but $\operatorname{Sg}_{P_i}(a,b)$ has been reduced to a trivial subalgebra $\{c\}$, 
$c\in\operatorname{Sg}_{P_i}(a,b)$, the triple $(i,a,b)$ as well as all triples $(i,d,d)$, where $d\neq c$ and $\operatorname{Sg}_{P_i}(a,b)$ are deleted from $L$. Also, all such $d$ are deleted from $P_i'$.

\end{enumerate}

If the (1,2)-symmetric Datalog fails, there cannot exist a solution through any element of $\operatorname{Sg}_{P_i}(a,b)$, since (1,2)-consistency fails. Thus, we assume the (1,2)-symmetric Datalog test succeeds.  If the test reduces the $i$-th domain to a proper subuniverse of $\operatorname{Sg}_{P_i}(a,b)$, such a subuniverse must be trivial and will be dealt with through an $(i,c,c)$-test at some, for some $c$. In this case, we also remove $(i,a,b)$ from $L$.

Consider $j\in J_{i,a,b}$, where $J_{i,a,b}$ is the group of $(i,a,b)$. Then, there exists a path,  call it $\alpha$, in the triple graph whose endpoints are $(i,a,b)$ and $(j,c,d)$, where $c\neq d$.  For every $j\in J_{i,a,b}$, we can fix such a path $\alpha_j$ that witnesses the membership of $j$ in $J_{i,a,b}$.  For $j\in J_{i,a,b}$ and its witness path $\alpha_j$, there exists a maximal congruence $\theta_j$ on $P'_j$, the $j$-th domain of the resulting (1,2)-consistent instance, such that
$$P'_j/\theta_j \cong \operatorname{Sg}_{P_i} (a,b).$$

Namely, the congruence $\theta_j\leq P'_j\times P'_j$ can be defined in the following way: $(c,d)\in \theta_j$  if, and only if  $\exists e\in \operatorname{Sg}_{P_i}(a,b)$ such that $e$ is connected via $\alpha$-paths to $c$ and $d$.

Using the rectangularity property of Maltsev subdirect products, it can be easily verified that $\theta_j$ is indeed a congruence on $P'_j$, and that the corresponding quotient algebra is isomorphic to $P'_i=\operatorname{Sg}_{P_i}(a,b)$.

Therefore, for every $j\in J_{i,a,b}$, there is a congruence $\theta_j$, such that $P'_j/\theta_j$ is isomorphic to $P'_i=\operatorname{Sg}_{P_i}(a,b)$. For that reason we can view the restriction of the (1,2)-consistent instance P' to the domains from $J_{i,a,b}$ as an instance of a constraint satisfaction problem over a simple affine group $Sg_{P_i}(a,b)$. As mentioned earlier, by fixing a particular element (coset) in every $P'_j/\theta_j$ as the zero element, we can view this instance as a system of linear equations over a fixed Abelian $p$-group $A$. In this system, every equation involves two variables and can be viewed as an equation of the form $x_j+c=x_k$, for some $j,k\in J_{i,a,b}$ $(j\neq k)$ and $a\in A$.

From (1,2)-consistency of the derived instance,  and every $j_1\in J_{i,a,b}$ and $c\in P'_j$, and every path pattern $\alpha$ of the form $j_1j_2\ldots j_k=i$ ($j_1,j_2,\ldots,j_{k-1}\in J_{i,a,b}$) in the triple graph,  where the subdirect products $P'_{j_l}\times P'_{j_m}$ are all unlinked, that is graphs of isomorphisms between two copies of the simple affine group $A$, there must exist some $e\in P'_i=\operatorname{Sg}_{P_i}(a,b)$, such that $c$ and $e$ are endpoints of a path with this pattern. In fact, such path pattern $\alpha$ defines an isomorphism:

$$\phi_\alpha: P_j'/\theta_j \cong \operatorname{Sg}_{P_i}(a,b)=P'_i.$$

Now, suppose $c\in P'_j$ was in a block $C$ of $\theta_j$, corresponding to the triple graph path $\alpha_j$,  for which every element was connected to a unique element $f\in \operatorname{Sg}_{P_i}(a,b)$.  What happens if there is an $e\in \operatorname{Sg}_{P_i}(a,b)$ and a path $\beta$, in the triple graph, consisting of isomorphisms between copies of $A$ from $j$ back to $i$ connecting $c$ to $e$, where $e\neq f$? If this is indeed the case,  we have a closed path pattern consisting of $\alpha_j$ followed by $\beta$, which starts and ends in the domain $P'_i=\operatorname{Sg}_{P_i}(a,b)$, and which corresponds to a sequence of linear equations on $A$, which, using Gaussian elimination, results in an equation of the form $x_i+c=x_i$, where $c$ is a non-zero element of $A$. This is, clearly, a contradiction. So, in order for the (1,2)-consistent instance to have a solution,  every $c\in P'_j$ ($j\in J_{i,a,b}$, $j\neq i$), must be connected to precisely one element of $P_i=\operatorname{Sg}_{P_i}(a,b)$ in the triple graph. 

If such a $c\in P'_j$ is connected in the triple graph to at least two elements of $\operatorname{Sg}_{P_i}(a,b)$, we will show that it is connected to all elements of this algebra in the triple graph. So, assume that the path $\alpha_j$ connects some element $e$ of ${Sg}_{P_i}(a,b)$ to $c$ and that there is another path $\beta$ connecting $c$ to some $f\in {Sg}_{P_i}(a,b)$, where $e\neq f$. As stated in the preceding paragraph,  the two paths in the triple graph, the first one connecting e to c and then back to e, and the other one, connecting e to c and then to f. induce two automorphisms of $\operatorname{Sg}_{P_i}(a,b)$:

$$ \pi_1: x\mapsto x+g \mbox{ and } \pi_2: x\mapsto x+h, \quad \mbox{for some } g,h\in A.$$
Keeping in mind that $A$ is a minimal algebra, equivalent to the affine Abelian group, under the Maltsev operation $x-y+z$, it is easily seen that, since $g\neq h$, $g$ and $h$ generate the entire affine Abelian group $A$. This also implies that every automorphism $\pi'$ of $A$, which is of the form $x\mapsto x+d$, for some $d\in A$ is generated by $\pi_1$ and $\pi_2$ using the Maltsev affine operation.

In turn, every such automorphism corresponds to some closed path pattern from $i$ to $i$ in the triple graph. Therefore, all elements of $\operatorname{Sg}_{P_i}(a,b)$ will be reachable from $c$ in the triple graph. Thus, if $c\in P'_j$ is connected to at least two elements of $\operatorname{Sg}_{P_i}(a,b)$, it will be connected to \emph{all} elements of this minimal algebra in the triple graph. This entails that $l=p^k\equiv 0 (\mod p)$. 

  %Due to the minimality of the underlying algebra, the orbits of $\pi$, being substructures fixed by $\pi$, can either be singletons or there is a single orbit of cardinality $|A|=p^k$. The former cannot be the case since the permutation $\pi$ is not trivial, so there is only one orbit. But, in that case, by composing the path $\beta$ with powers of $\pi$, we see that there are paths in the triple graph, consisting of isomorphisms between copies of $A$, which connect $c$ to every element of ${Sg}_{P_i}(a,b)$ 

Therefore,  in the $(i,a,b)$-test, once (1,2)-consistency has been established and assuming that neither $a$ nor $b$ were eliminated in the process, counting the number $l$ of elements of $P'_i$ connected to an element $c\in P_j'$ in the resulting triple graph will determine if the solution exists. If $l\equiv 1 (\mod p)$, no inconsistency obstructing the existence of a solution through the minimal algebra is detected and the triple $(i,a,b)$ has passed the test.  Otherwise, $l\equiv 0 (\mod p)$ and no solution can exist through either $a$ or $b$.

In the case of a triple $(i,a,a)$, where $i\in V$ and $a\in P_i$, the $(i,a,a)$-test simply verifies the (1,2)-consistency of the instance whose $i$-th domain $P_i$ is $\{a\}$, using the (1,2)-symmetric linear Datalog program. If the test fails, $(i,a,a)$ is eliminated from the list $L$.

\subsection{ Proof of the Algorithm Correctness}

To complete our proof, we show that passing our $\#\textsc{L}$ consistency test is equivalent to the CSP having a solution. This subsection essentially consists of the proof of the following statement:

\begin{proposition}
    If, at the conclusion of all $(i,a,b)$-tests, the list $L$ is nonempty, the instance $P$ has a solution.
\end{proposition}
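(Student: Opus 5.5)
Since the list $L$ stabilizes, every surviving triple $(i,a,b)$ passes its own test. I would build a solution by a self-reducing descent: pick the variables one at a time, and each time fix that domain to a single value. After every fixing, the $(1,2)$-consistency and the $(i,a,b)$-tests are restored. I will argue by induction on $\sum_i |P_i|$, where $P_i$ now denotes the set of values $a$ with $(i,a,a)\in L$. The invariant is that the current instance is $(1,2)$-consistent and that every triple in $L$ still passes its test. In the base case all $P_i$ are singletons, and $(1,2)$-consistency, i.e.\ all $E_{i,j}$ being subdirect, already gives a solution.

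For the inductive step I take a variable $i$ with $|P_i|\ge 2$. Since $\mathbb{A}'$ is idempotent, $P_i$ contains a minimal subuniverse $\operatorname{Sg}_{P_i}(a,b)$ with $(i,a,b)\in L$. By the discussion preceding the statement, this subuniverse is either polynomially equivalent to the two-element Boolean algebra or a simple affine $p$-group $A$. I restrict $P_i$ to $\operatorname{Sg}_{P_i}(a,b)$ and rerun the symmetric $(1,2)$-Datalog program; because $(i,a,b)$ passed case (1) of its test, this does not fail. By Theorem~\ref{link}, every variable $j$ in the group $J_{i,a,b}$ acquires a congruence $\theta_j$ with $P'_j/\theta_j\cong\operatorname{Sg}_{P_i}(a,b)$. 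Variables outside $J_{i,a,b}$ are joined to the group only by linked (full-product) constraints, so they impose no restriction on the choice made on $J_{i,a,b}$.

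On $J_{i,a,b}$ the quotient instance is a system of two-variable equations $x_k=x_j+c$ over $A$. The counting condition $l\equiv 1\pmod p$ is exactly the statement, established before the proposition, that every $c\in P'_j$ is linked to a unique element of $\operatorname{Sg}_{P_i}(a,b)$. Hence every closed path pattern through $i$ induces the identity translation, and the system is consistent. Concretely, each choice $e\in\operatorname{Sg}_{P_i}(a,b)$ determines compatible $\theta_j$-blocks $B_j$ for all $j\in J_{i,a,b}$. I then fix $P_i=\{e\}$ and $P_j=B_j$, which gives a strictly smaller instance. In the Boolean-type case the same argument applies with $\oplus$ in place of the group operation, and $p=2$.

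The main obstacle is showing that the invariant survives this restriction. The new instance must still be $(1,2)$-consistent, and every triple that survives in $L$ inside the blocks $B_j$ must still pass its test, so that the induction continues. For $(1,2)$-consistency I would use rectangularity (Proposition~\ref{rect}). A constraint $E_{j,k}$ between two restricted domains maps blocks to blocks, because both $\theta_j$ and $\theta_k$ are pulled back along the same isomorphisms. So $E_{j,k}\cap(B_j\times B_k)$ is subdirect, and constraints to unrestricted variables remain subdirect because those constraints are linked. For the tests, I would check that a minimal subuniverse $\operatorname{Sg}(c,d)\subseteq B_j$ has a triple-graph component that is unchanged by the restriction: the edges of that component lie inside $\theta$-blocks, which the restriction preserves. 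Its count $l$ is therefore also unchanged. This last point is the step I would verify first and most carefully, since it is where the argument could break.
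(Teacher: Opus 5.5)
Your proposal has a genuine gap, and it sits exactly where you suspect: the invariant is asserted, and both reasons you give for it fail. First, Theorem~\ref{link} only describes the single constraint $E_{i,j}$. Membership $j\in J_{i,a,b}$ is witnessed by a path in the triple graph, and $E_{i,j}$ itself can be the full product. Over the affine $\mathbb{Z}_2$, take $P_1=P_3=\{0,1\}$, $P_2=\{0,1\}^2$, $E_{1,2}=\{(x,v): x=v_1\}$, $E_{2,3}=\{(v,z): z=v_2\}$ and $E_{1,3}=P_1\times P_3$. Then $(1,0,1),(2,00,11),(3,0,1)$ is a path, so $3\in J_{1,0,1}$, but there is no $\theta_3$. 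Second, ``$E_{j,k}$ maps blocks to blocks because both congruences are pulled back along the same isomorphisms'' is not an argument. The congruences $\theta_j,\theta_k$ are read off from $E_{i,j}$ and $E_{i,k}$, while $E_{j,k}$ is an independent constraint, and (1,2)-consistency gives no control over triangles. Take $P_1=\{0,1\}$, $P_2=P_3=\{0,1\}^2$, $E_{1,2}=\{(x,v):x=v_1\}$, $E_{1,3}=\{(x,w):x=w_1\}$ and $E_{2,3}=\{(v,w):v_1=w_2\}$. This instance is (1,2)-consistent and even solvable. Here $\theta_2,\theta_3$ are the kernels of the first coordinate, and $e=0$ gives $B_2=B_3=\{00,01\}$. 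But $E_{2,3}\cap(B_2\times B_3)=\{(00,00),(01,00)\}$ is not subdirect.

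In both examples the count at $(1,0,1)$ actually fails (in the second, $10\in P_3$ is reached from both $0$ and $1$), so the test may well exclude such configurations. Then what you must prove is a lemma with three parts. A passed count at $(i,a,b)$ must yield a well-defined linkage map $P'_j\to\operatorname{Sg}_{P_i}(a,b)$ for every $j\in J_{i,a,b}$. Every $E_{j,k}$ inside the group must respect its fibres. Every surviving triple must still pass its test after the restriction. Your component argument does not give the third part. The $(j,c,d)$-test reruns the Datalog program on the restricted instance, whose fixpoint can be strictly smaller or even empty, so neither the component nor the count $l$ is automatically the old one. You also skip a case. Idempotency gives a minimal subuniverse inside $P_i$, not one with $(i,a,b)\in L$. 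A failed count deletes only $(i,a,b)$ and keeps $(i,a,a),(i,b,b)$, so you must also handle fixing $P_i=\{a\}$, about which only (1,2)-consistency is known. For comparison, the paper never passes to $\theta$-blocks. It restricts each domain in turn to a minimal subuniverse or a singleton from $L$, rerunning Datalog, so that in $\tilde{P}$ every domain is simple. By Theorem~\ref{maltsev}, every constraint is then either full or the graph of an isomorphism, no block bookkeeping is needed, and the whole preservation burden falls on Claim~\ref{group}. Your value-fixing route needs the stronger invariant that all surviving tests keep passing after each step, and that is exactly the part left unproved.
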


%\begin{proof}
    %If the list $L$ is nonempty, then, for every $i\in V$, there must exist some $(i,a,b)$ (with the possibility that $a=b$), which remains in $L$. Choose some $i_0\in V$ and $(i_0,a,b)$, which is in $L$ at the end of all consistency checks.  Since the $(i,a,b)$-test succeeded, there is a (1,2)-consistent subinstance $P'$, for which $P'_{i_0}=\operatorname{Sg}_{P_{i_0}}(a,b)$ is the $i_0$-th domain.

    %So suppose now that we have a domain $\D_{i_0}$ which is non-simple and we have our associated multiconsistency graph $G$ which passed the consistency checks. Define the group of a domain $\A_i$ to be the indices of the domains which can be reached from $\A_i$ along isomorphism paths. Let $\t_{i_0}$ be the maximal congruence associated with $\D_{i_0}$, let $I_0$ be the group of $\D_{i_0}/\t_{i_0}$, and let $\C$ be any block of $\D_{i_0}/\t_{i_0}$. We now have a $(1,2)$ consistent subinstance $G'$ where the ${i_0}^{th}$ domain is $\C$ and the other domains in the group are the isomorphic copy of $\C$ and the remaining domains are as before. Recall that since the subinstance passes the datalog check, every pair of domains forms a subdirect product. If this new instance passes our $\#\textsc{L}$ consistency check then the inductive hypothesis applies and we get a solution to $G'$. 
    %\end{proof}

   We proceed to prove the claim which will play the crucial role in establishing that no inconsistency i.e. unsolvability of a system of linear equations in two variables over an  Abelian $p$-group (for some prime $p$), which has been discovered through the construction of $L$,can be passed down to the subinstance $P'$.  
   
   Let $j\in V$ be such that $j\neq i_0$, and let $C=\operatorname{Sg}_{P'_j}(c,d)$, be the minimal subuniverse of $P'_j$, generated by two elements $c,d\in P'_j$, so that $(j,c,d)\in L$.
    
\begin{claim}\label{group}

    The group of $(j,c,d)$ in $P'$ is the same as the group $J_{j,c,d}$ of  $(j,c,d)$ computed during the $(j,c,d)$-test.  
    
\end{claim}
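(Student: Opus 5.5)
The claim compares two connected components of the triple graph around the same vertex $(j,c,d)$. The first, $J_{j,c,d}$, is computed in $\mathbf{G}_{\mathcal{P}}$ during the $(j,c,d)$-test. The second is computed in $\mathbf{G}_{\mathcal{P}'}$, where $\mathcal{P}'$ is the (1,2)-consistent subinstance whose $i_0$-th domain is $\operatorname{Sg}_{P_{i_0}}(a,b)$ and whose remaining domains are the surviving ones. I will prove the two inclusions separately. Throughout I will use only two facts: edges of the triple graph are unlinked subdirect products of two 2-generated subuniverses, and by Theorem \ref{maltsev} an unlinked subdirect product of two simple algebras is the graph of an isomorphism.

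\textbf{Inclusion from $\mathcal{P}'$ to $\mathcal{P}$.} Take a path in $\mathbf{G}_{\mathcal{P}'}$ from $(j,c,d)$ to some $(k,e,f)$. Each edge $(l,x,y)$--$(m,u,v)$ on it satisfies $(x,u),(y,v)\in E'_{l,m}\subseteq E_{l,m}$. I then need to see that the non-edges $(x,v),(y,u)\notin E_{l,m}$ are inherited from $\mathcal{P}$. This is the subtle direction, because $E'_{l,m}\subseteq E_{l,m}$ only passes membership downward, not non-membership. My plan is to use minimality. Since $(l,x,y)$ survives in $L$, the subuniverse $\operatorname{Sg}_l(x,y)$ is minimal, hence strictly simple. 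The restriction of $E_{l,m}$ to $\operatorname{Sg}_l(x,y)\times\operatorname{Sg}_m(u,v)$ is then either the full product or an isomorphism graph. The full-product case is excluded: by rectangularity (Proposition \ref{rect}), the $(1,2)$-consistency of both instances, and the fact that $E'$ already realises the isomorphism $x\mapsto u$, $y\mapsto v$, two distinct bijections of a strictly simple affine algebra would generate a linked product in $E'$ as well, contradicting the edge in $\mathbf{G}_{\mathcal{P}'}$. The inclusion then follows by induction on path length.

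\textbf{Inclusion from $\mathcal{P}$ to $\mathcal{P}'$.} Take a path in $\mathbf{G}_{\mathcal{P}}$ from $(j,c,d)$ to $(k,e,f)$. I must show that every vertex on it survives into $\mathcal{P}'$, i.e.\ that its two elements remain in the domains, and that every edge is retained. For this I use the $(j,c,d)$-test. Because $(j,c,d)$ is still in $L$, the test passed: the (1,2)-Datalog run restricted to $\operatorname{Sg}(c,d)$ kept nontrivial domains, and the count satisfied $l\equiv 1\pmod p$. Each intermediate vertex $(m,u,v)$ along the path is then linked to $(j,c,d)$ by a composite isomorphism $\operatorname{Sg}(c,d)\cong\operatorname{Sg}(u,v)$. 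If $u$ or $v$ had been deleted from $P'_m$, that deletion would come from a failed test on a vertex in the same component. Following the isomorphism chain, the failure would propagate, through the symmetric rule $R_j(x)\leftarrow E_{m,j}(y,x),R_m(y)$ and its complement, to the deletion of $c$ or $d$ at $j$, contradicting $(j,c,d)\in L$. Once all vertices survive, the edges survive too, because domains shrink and the isomorphism graphs are unaffected.

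\textbf{Main obstacle.} The step I expect to be hardest is the propagation argument in the second inclusion: showing that deletions performed by different tests cannot cut a path without also removing $(j,c,d)$. My plan there is to argue that the set of surviving vertices is closed under triple-graph adjacency. This should follow from the symmetry of the Datalog rules together with the fact that each edge is a bijection between strictly simple algebras, so that a deletion occurring at one end of an edge forces the corresponding deletion at the other end.
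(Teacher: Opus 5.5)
\textbf{Verdict: there is a genuine gap, in your second inclusion} ($J_{j,c,d}$ contained in the group of $(j,c,d)$ in $\mathcal{P}'$), exactly at the step you flagged. The set of surviving triples is \emph{not} closed under adjacency in $\mathbf{G}_{\mathcal{P}}$, so witness paths really can be cut. Three things go wrong.

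(i) Vertices of the triple graph are all pairs $u\neq v$ in $P_m$, not only members of $L$. So an intermediate $\operatorname{Sg}_m(u,v)$ need not be minimal, and in general there is no composite isomorphism $\operatorname{Sg}_j(c,d)\cong\operatorname{Sg}_m(u,v)$; Theorem~\ref{link} only gives an isomorphism from a quotient.

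(ii) Even when both ends of an edge $(l,x,y)$--$(m,u,v)$ are minimal, the edge only makes $E_{l,m}\cap(\operatorname{Sg}_l(x,y)\times\operatorname{Sg}_m(u,v))$ an isomorphism graph. The element $x$ can have further $E_{l,m}$-neighbours in $P_m$ outside $\operatorname{Sg}_m(u,v)$, so deleting $u$ from $P_m$ does not force deleting $x$ from $P_l$. Then $(m,u,v)$ disappears while $(l,x,y)$ stays.

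(iii) Deletions in $\mathcal{P}'$ are not all caused by failed tests inside the component of $(j,c,d)$. A failed test on some $(m,u,w)$ elsewhere also removes $u$. More importantly, the Datalog propagation after cutting the $i_0$-th domain down to $\operatorname{Sg}_{P_{i_0}}(a,b)$ removes elements for reasons unrelated to $(j,c,d)$. The facts you extract from the passed $(j,c,d)$-test concern the subinstance whose $j$-th domain is $\operatorname{Sg}_{P_j}(c,d)$, not $\mathcal{P}'$, and they are never actually used.

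What has to be shown is only index-level: for each $k\in J_{j,c,d}$, \emph{some} path in $\mathbf{G}_{\mathcal{P}'}$ from $(j,c,d)$ to a vertex over $k$. In general that path must be rebuilt. The first step of a rebuild works. If $(j,c,d)$--$(j_1,u_1,v_1)$ is the first edge of a witness path, Proposition~\ref{rect} shows that $c$ and $d$ have no common $E_{j,j_1}$-neighbour. So any $u'_1,v'_1\in P'_{j_1}$ with $(c,u'_1),(d,v'_1)\in E_{j,j_1}$ (they exist by (1,2)-consistency) give an edge $(j,c,d)$--$(j_1,u'_1,v'_1)$ in $\mathbf{G}_{\mathcal{P}'}$. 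At later steps, however, the re-chosen pair $u'_1,v'_1$ may have a common $E_{j_1,j_2}$-neighbour even though $u_1,v_1$ did not. Controlling this is the real content of the inclusion. The paper gives no help here: it asserts this inclusion with the single word ``Certainly''. So this direction remains unproved in your proposal.

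\textbf{Your first inclusion, by contrast, is immediate.} The Datalog program only shrinks domains, so $E'_{l,m}=E_{l,m}\cap(P'_l\times P'_m)$. For $x,y\in P'_l$ and $u,v\in P'_m$, the memberships \emph{and} the non-memberships in the edge condition are therefore the same in $\mathcal{P}$ and in $\mathcal{P}'$. Hence $\mathbf{G}_{\mathcal{P}'}$ is the induced subgraph of $\mathbf{G}_{\mathcal{P}}$ on the surviving triples, and every path in it is a path in $\mathbf{G}_{\mathcal{P}}$. The minimality/rectangularity argument you give there is unnecessary, and it rests on the false premise that every vertex of the path lies in $L$.

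The paper handles this inclusion by contradiction instead. An index outside $J_{j,c,d}$ is attached to the group only through full direct products (Theorem~\ref{maltsev}), and these stay full after the $j$-th domain is restricted to $\operatorname{Sg}(c,d)$.
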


\begin{proof} Certainly $J_{j,c,d}$ must be contained in the group of $(j,c,d)$. in $P'$.  Next, assume there is an index $j'$ in the group of $(j,c,d)$ in $P'$, which is not in  $J_{j,c,d}$. This implies that between any domain whose index is in the group and the domain of index $j'$ we do not have any possible isomorphism. Thus by Theorem~\ref{maltsev} they are full direct products and so we have that it is still a full direct product when we reduce the $j$-th domain of $P'$,to $\operatorname{Sg}(c,d)$. Thus $j'$ cannot be in the group of $(j,c,d)$ in $P'$ and the claim is proven.
\end{proof}

Thus, solvability of any system of linear equations in two variables over a fixed elementary Abelian $p$-group, determined by a minimal agebra in one of the domains, is preserved by restricting the said domain to that minimal algebra, provided that the minimal algebra in question, $\operatorname{Sg}_{P'_j}(c,d)$, is such that $(j,c,d)$ is in $L$, after the termination of all $\#\textsc{L}$-consistency tests.

Using this claim, we can proceed through the list of variables $V=\{1,2,\ldots,n\}$, starting with $i=1$, restrict the domain to the minimal algebra $P'_1=\operatorname{Sg}_{P_1}(a_1,b_1)$, for some $a,b\in P_1$, such that $(1,a_1,b_1)\in L$. If the only pairs in $L$, such that $(1,a_1,b_1)\in L$ are those for which $a_1=b_1$, the restricted domain $P'_1$ will be the trivial algebra $\{a_1\}$. After that, we apply the (1,2)-symmetric linear Datalog program to the instance $P'$ with the restricted domain $P'_1= \operatorname{Sg}_{P_1}(a_1,b_1)$ and,  proceed to reduce the domains $P'_2,\ldots, P'_n$ of $P'$ in the analogous fashion, until we are left with an instance $\tilde{P}$, all of whose domains are either trivial or minimal algebras.

Due to the fact that $\tilde{P}$ is a (1,2)-consistent instance, all trivial domains can be ignored since, the binary constraints in which those variables appear, are represented by full direct products with other domains in the said constraints. Therefore, we can assume that all domains of $\tilde{P}$ are minimal algebras, i.e. that they are equivalent to elementary Abelian $p$-groups, for some primes $p$.

Consider some $i_0\in V$. Then, $\tilde{P}_{i_0}=\operatorname{Sg}_{P_{i_0}}(a,b)$, for some $a,b\in P_0$, where $a\neq b$.  The group of $\tilde{P}_{i_0}$ in the triple graph for $\tilde{P}$ will be the same as $J_{i_0,a,b}$ in the original instance $P$, due to Claim \ref{group}. In fact, the groups of variables of $\tilde{P}$ will form a partition of $V$. Every block of such a partition corresponds to a linear system of equations in two variables over a fixed Abelian group $A$.  Since the solvability of such systems cannot be altered when restricting domains using triples in $L$, every such system will have a solution, with solutions passing through every element of the minimal algebra, which is the domain.

\section{Conclusion and Open Problems}

In this article, we developed a new algorithm for solving instances of $\operatorname{CSP}(\mathbf{A})$, where $\mathbf{A}$ is a finite relational template parametrized by a Maltsev algebra. The interesting feature of the algorithm is that, in the presence of constraint relations of arity at most 2,  it is primarily based on the series of consistency checks (albeit, not local ones) for pairs of elements, and for that purpose, it utilizes an auxiliary structure, an undirected graph which can be associated to any instance with at most binary constraints. In fact, consistency testing relies heavily on the connectivity checks in the said graph, which underscores the key role played by logspace transducers in the algorithm construction. The ability to reduce the solvability of the binary instance to the new consistency checks also allows for a finer analysis of the computational complexity of such an algorithm.

The algorithm presented here sharply differs from the standard Bulatov-Dalmau algorithm (Generalized Gaussian Elimination) in the sense that it does not exploit the fact that the finite subpowers of finite Maltsev algebras have compact representations and avoids having to compute a representation of intersections of constraints. In fact, any algorithm for solving instances of Maltsev constraint satisfaction problems, which operates in logspace  cannot be based on maintaining generating sets, since the space requirements would be at least linear. Our algorithm, in addition to  logarithmic space on its work tapes, only needs a bounded number of \textsc{MOD}-logspace oracles, which can be fixed at the outset, since they depend solely on the parametrizing algebra $\mathbb{A}$, and which reduce to the problem of computing determinants over a fixed set of finite rings $\{\mathbb{Z}_{k_1},\ldots,\mathbb{Z}_{k_r}\}$, which, again, is solely determined by $\mathbb{A}$.

We believe that further analysis  of this algorithm may be helpful in answering the following question:

\begin{openproblem} If $\mathbf{A}$ is a finite relational template, which can be parametrized by a Maltsev algebra, can $\operatorname{CSP}(\mathbf{A})$ be defined in an extension of first-order logic such as Choiceless Polynomial Time with Counting (or, equivalently, Polynomial Interpretation Logic with H\"artig quantifiers), or in  \textsc{LFP+Rk}, the extension of the Least Fixed Point logic with the operators for computing ranks of matrices over finite fields $\mathbb{Z}_p$ ($p$ - a prime)?
\end{openproblem}

Namely, the algorithm presented here indicates that an instance of $\operatorname{CSP}(\mathbb{A})$, where $\mathbb{A}$ is a finite algebra which has a ternary Maltsev operation, can be solved by repeated use of reductions to a smaller instance which consist of two types of reductions: (1) reductions that can be implemented in Datalog (in fact, Symmetric Linear Datalog) and (2) those that are based on oracles which can be viewed as checking if certain matrices over rings $\mathbb{Z}_k$ are invertible. This, in our opinion, provides a strong evidence that answer to the problem above is likely to be affirmative, at least for the logic \textsc{LFP+Rk}. 

%Furthermore, it would be interesting to see if the algorithm developed here, which utilizes the triple digraph of the instance, could be adapted in such a way that it provides an alternative proof of the Dichotomy Theorem for syntactically simple binary instances of $\operatorname{CSP}(\mathbb{A})$, where $\mathbb{A}$ is an algebra with a Taylor (or Siggers) operation. In particular, if the triple digraph is constructed using 2-generated absorption-free subalgebras of $\mathbb{A}$, can our algorithm be adapted so that the connectivity conditions in the triple digraph can be used to determine the existence of a solution?

%\begin{openproblem} If $\mathbb{A}$ is a finite Taylor algebra, can an algorithm based on the triple graph of 2-generated absorption-free subuniverses of $\mathbb{A}$ be used to provide a proof of the Dichotomy Theorem?
%\end{openproblem}

\bibliography{digraph_reduction.bib}
\bibliographystyle{ws-ijac}

\end{document}